\documentclass[11pt]{article}
\usepackage[top=1in,left=1in,right=1in,bottom=1in]{geometry}
\pdfoutput=1

\newfont{\mycrnotice}{ptmr8t at 7pt}
\newfont{\myconfname}{ptmri8t at 7pt}
\let \originalleft \left
\let\originalright\right
\renewcommand{\left}{\mathopen{}\mathclose\bgroup\originalleft}
\renewcommand{\right}{\aftergroup\egroup\originalright}

\clubpenalty=10000
\widowpenalty = 10000

\usepackage{color,array}
\usepackage{amsthm}
\usepackage{paralist}
\usepackage{booktabs}
\usepackage{paralist}

\usepackage[colorlinks=true, citecolor=black, urlcolor=cyan,
linkcolor=black]{hyperref}
\usepackage{amsfonts,amsmath,algorithm,algorithmicx,amssymb,graphicx,url,multirow}
\usepackage{enumerate}
\usepackage[noend]{algpseudocode}
\usepackage[bf,small]{caption}
\usepackage{enumitem}
\usepackage{pgfplots}
\usepackage{subcaption}
\usepackage{makecell}


\DeclareCaptionType{copyrightbox}

\newcommand{\myparagraph}[1]{\smallskip\noindent {\bf #1.}}
\newcommand{\defn}[1]{\emph{\textbf{#1}}}

\newcommand{\mpram}{MT-RAM}

\newcommand{\process}{thread}
\newcommand{\processes}{threads}

\newcommand{\forkins}{\texttt{fork}}
\newcommand{\insend}{\texttt{end}}

\def\code#1{\texttt{#1}}
\def\N{\mathbb{N}}
\providecommand{\ceil}[1]{\left\lceil#1\right\rceil}
\providecommand{\on}[1]{\operatorname{#1}}
\providecommand{\set}[1]{\left\{#1\right\}}

\newtheorem{theorem}{Theorem}
\newcommand{\codevar}[1]{\mathit{#1}}


\algdef{SE}[DOWHILE]{Do}{doWhile}{\algorithmicdo}[1]{\algorithmicwhile\ #1}%
\algdef{SE}[SUBALG]{Indent}{EndIndent}{}{\algorithmicend\ }%
\algtext*{Indent}
\algtext*{EndIndent}

\usepackage{microtype}
\begin{document}
\date{}

\title{Batch-Parallel Euler Tour Trees\footnote{This is the full version of the paper
    appearing in ALENEX 2019.}
}
\author{Thomas
Tseng\thanks{Computer Science Department, Carnegie Mellon University. \href{mailto:thomasts@alumni.cmu.edu}{thomasts@alumni.cmu.edu}} \\
\and
Laxman Dhulipala\thanks{Computer Science Department, Carnegie Mellon University.
\href{mailto:ldhulipa@cs.cmu.edu}{ldhulipa@cs.cmu.edu}} \\
\and
Guy Blelloch\thanks{Computer Science Department, Carnegie Mellon University.
\href{mailto:guyb@cs.cmu.edu}{guyb@cs.cmu.edu}}
}
\date{}

\maketitle

\begin{abstract} \small
  The dynamic trees problem is to maintain a forest undergoing edge
  insertions and deletions while supporting queries for information
  such as connectivity. There are many existing data structures for this
  problem, but few of them are capable of exploiting parallelism in the
  batch setting, in which large batches of edges are inserted or deleted from
  the forest at once. In this paper, we demonstrate that the Euler tour tree, an
  existing sequential dynamic trees data structure, can be parallelized in the
  batch setting.  For a batch of $k$ updates over a forest of $n$ vertices, our
  parallel Euler tour trees achieve $O(k \log (1 + n/k))$ expected work and
  $O(\log n)$ depth with high probability. Our work bound is asymptotically
  optimal, and we improve on the depth bound achieved by Acar et al.\ for
  the batch-parallel dynamic trees problem~\cite{acar2017brief}.

  Our main building block for parallelizing Euler tour trees is a
  batch-parallel skip list data structure, which we believe may be of
  independent interest.  Euler tour trees require a sequence data
  structure capable of joins and splits. Traditionally, balanced
  binary trees are used, but they are difficult to join or split
  in parallel when processing batches of updates. We show that skip
  lists, on the other hand, support batches of joins or splits of size
  $k$ over $n$ elements with $O(k \log (1 + n/k))$ work in expectation
  and $O(\log n)$ depth with high probability. We also achieve the
  same efficiency bounds for augmented skip lists, which allows us to
  augment our Euler tour trees to support subtree queries.

  Our data structures achieve between $67$--$96\times$ self-relative
  speedup on 72 cores with hyper-threading on large batch sizes. Our
  data structures also significantly outperform the fastest existing
  sequential dynamic trees data structures empirically.
\end{abstract}

\clearpage

\section{Introduction}\label{sec:intro}

In the dynamic trees problem proposed by Sleator and
Tarjan~\cite{sleator1983data}, the objective is to maintain a forest
that undergoes \emph{link} and \emph{cut} operations. A link
operation adds an edge to the forest, and a
cut operation deletes an edge. Additionally, we want to maintain useful
information about the forest. Most commonly we are concerned with whether pairs
of vertices are connected, but we might also be interested in properties like
the size of each tree in the forest. Sleator and Tarjan first studied the
dynamic trees problem in order to develop fast network flow
algorithms~\cite{sleator1983data}.  Dynamic trees are also an important
component of many dynamic graph algorithms~\cite{sleator1983data,
henzinger1995randomized, holm2001poly, alstrup2005maintaining,
kapron2013dynamic}.

In the batch-parallel version of the dynamic trees problem, the
objective is to maintain a forest that undergoes \emph{batches} of
link and cut operations. Though many sequential data
structures exist to maintain dynamic trees, to the best of our
knowledge, the only batch-parallel data structure is a very recent
result by Acar et al.~\cite{acar2017brief}. Their data structure is
based on parallelizing RC-trees, which require transforming the represented
forest to have bounded degree in order to perform
efficiently~\cite{acar2004dynamizing}. Obtaining a data structure
without this restriction is therefore of interest. Furthermore, it is
of intellectual interest whether the arguably simplest solution to the
dynamic trees problem, Euler tour trees (ETTs), can be
parallelized.

In this paper, we answer this question in the affirmative and show
that Euler tour trees, a data structure introduced
by Henzinger and King~\cite{henzinger1995randomized} and Miltersen et
al.~\cite{miltersen1994complexity}, achieve asymptotically optimal
work and optimal depth in the batch-parallel setting. We also develop a
batch-parallel skip list upon which we build our Euler tour trees. Note that
batching is not only useful for parallel applications but also for
single-threaded applications; our $O(k \log (1 + n/k))$ work bounds for $k$
operations over $n$ elements
on Euler tour trees and augmented skip lists beat the $O(k \log n)$
bounds achieved by performing each operation one at a time on standard
sequential data structures.

Our main contributions are as follows:

\noindent\textbf{Skip lists for simple, efficient parallel joins and parallel
splits.} We show that we can perform $k$ joins or
$k$ splits over $n$ skip list elements with $O(k \log (1 + n/k))$ expected
work and $O(\log n)$ depth with high probability\footnote{
We say that an algorithm has $O(f(n))$ cost \defn{with high
probability (w.h.p.)} if it has $O(\on{poly}(k)\cdot f(n))$ cost with probability
at least $1 - 1/n^{k}$.}. To the best of our knowledge, we are the
first to demonstrate such efficiency for batch joins and splits
on a sequence data structure supporting fast search. Our skip list data
structure can also be augmented to support efficient computation over
contiguous subsequences within the same efficiency bounds.

\noindent\textbf{A parallel Euler tour tree.} We
apply our skip lists to develop Euler tour trees that support parallel bulk
updates. Our Euler tour tree algorithms for adding
and for removing a batch of $k$ edges achieve $O(k \log (1 + n/k))$ expected
work and $O(\log n)$ depth with high probability.
These are the \emph{best known bounds for the batch-parallel dynamic trees
problem}.

\noindent\textbf{Experimental evidence of good performance.}
Our skip list and Euler tour tree data structures achieve good
self-relative speedups, ranging from $67\times$ to $96\times$ for large batch
sizes on 72 cores with
hyper-threading in our experiments. We also show that they
significantly outperform the fastest existing sequential alternatives.
Our code is publicly
available\footnote{\url{https://github.com/tomtseng/parallel-euler-tour-tree}}.

\section{Related Work}

\subsection{Sequences}

A common data structure for representing sequences is the search tree.
Concurrent binary search trees, however, tend to be hard to maintain because of
the frequent tree rebalancing necessary to preserve fast access times. Kung and
Lehman present a concurrent binary search tree supporting search, insertion, and
deletion~\cite{kung1980concurrent}. Their implementation grabs locks during
rebalancing, which blocks searches from proceeding. Ellen et al.\ provide a
lock-free binary search tree with the downside that the tree has no balance
guarantees~\cite{ellen2010non}. Braginsky and Petrank design a lock-free
balanced tree in the form of a B+ tree~\cite{braginsky2012lock}.

Batch parallelism for search, insertions, and deletions has been
studied in 2--3 trees~\cite{paul1983parallel}, red-black
trees~\cite{park2001parallel}, and
B-trees~\cite{higham1994maintaining}. All of these data structures
achieve $O(k \log n)$ work and $O(\log n + \log k)$ depth.

Very recently, Akhremtsev and Sanders implement parallel joins and
splits for $(a,b)$-trees as subroutines for efficient batch
updates~\cite{akhremtsev2016fast}. The work for batch joins is $O(k
\log (1 + n/k))$, and the work for batch splits is $O(k \log n)$. The
depth for both operations is $O(\log n)$.
Compared to~\cite{akhremtsev2016fast}, our skip lists are simpler,
allow augmentation, and improve on the work for batch splits. However,
as $(a,b)$-trees are a deterministic data structure, Akhremtsev and Sanders
obtain deterministic bounds whereas our bounds are randomized.

\myparagraph{Skip lists}
Skip lists are a randomized data structure introduced by Pugh for
representing ordered sequences~\cite{pugh1990skip}.  Concurrent skip
lists may be used as the basis for
dictionaries~\cite{sundell2004scalable} and priority
queues~\cite{linden2013skiplist,shavit2000skiplist,sundell2005fast}. Skip lists are also used for storing
database indices. For example, the popular database system MemSQL
builds its indices upon skip lists~\cite{memsqlindexdocs}. To the best
of our knowledge, no existing skip-list implementation supports batch-parallel
bounds for performing batches of splits or joins.

Pugh~\cite{pugh1990concurrent}, Herlihy et
al.~\cite{herlihy2006provably}, and Fraser~\cite{fraser2004practical} describe
concurrent skip lists supporting search, insertion, and deletion. They allow all
operations to run concurrently and do not show theoretical bounds. Gabarr\'{o}
et al.\ present a skip list supporting batch searches, insertions, or deletions in
$O(k (\log n + \log k))$ expected work and $O(\log n + \log k)$ expected
depth~\cite{gabarro1996design}.

\subsection{Dynamic Trees}
Many sequential data structures exist for the dynamic trees problem. Sleator and
Tarjan introduced the problem and gave a sequential data structure known as the
ST-tree or the link-cut tree for the problem~\cite{sleator1983data}. ST-trees
are difficult to parallelize because they rely on a complicated biased search
tree data structure.  Sleator and Tarjan later showed that ST-trees
can be significantly simplified by using splay trees~\cite{sleator1985self}. However, splay
trees are not amenable to parallelization due to the major structural
changes on every access caused by splaying nodes. Another data structure is Frederickson's topology
tree, which works by hierarchically clustering the represented
forest~\cite{frederickson1985data}. Acar et al.'s RC-trees similarly contracts
the forest to obtain a clustering~\cite{acar2004dynamizing}.
Unfortunately, both of these data structures require the forest to
have bounded degree and thus require modifying the original graph by
splitting high degree vertices into several bounded degree vertices.
Top trees, devised by Alstrup et al., circumvent this restriction and
allow for unbounded degree~\cite{alstrup2005maintaining}. They also
have the most general interface. The
Euler tour tree, developed by Miltersen et
al.~\cite{miltersen1994complexity}\ and Henzinger and
King~\cite{henzinger1995randomized}, is arguably the simplest data
structure for solving the dynamic trees problem,
but, unlike many other dynamic trees data structures, they do not support path queries.

Acar et al.\ very recently developed a batch-parallel solution to the dynamic
trees problem~\cite{acar2017brief}.  They achieve the same work bound
as our solution of $O(k
\log (1 + n/k))$ in expectation, but their depth bound is $O(C(k)\log n)$ where
$C(k)$ is the depth of compacting $k$ elements. As $C(k)$ is
$\Omega\left(\log^* k\right)$~\cite{mackenzie1992load}, our Euler tour trees
achieve better depth. Their data structures also require transforming
the input forest into a bounded-degree forest in order to use parallel
tree-contraction efficiently~\cite{miller1985parallel}.

\subsection{Parallel Dynamic Connectivity}
The dynamic trees problem with connectivity queries is the dynamic connectivity
problem restricted to acyclic graphs. A nearly ubiquitous strategy for dynamic
connectivity is to maintain a spanning forest of the graph as it undergoes
modifications. The difficulty of dynamic connectivity comes from discovering a
replacement edge going across a cut after deleting an edge in the maintained
spanning forest. Stipulating that the represented
graph is acyclic, as we do in this paper, simplifies the problem
because it guarantees that edge removal breaks a connected component
into two.

Though there is much work on sequential dynamic
connectivity~\cite{frederickson1985data, henzinger1995randomized,
thorup2000near, holm2001poly, kapron2013dynamic}, parallel dynamic
connectivity is not well explored. McColl et al.\ provide a parallel
algorithm for batch dynamic connectivity including edge deletions, but
their goal is to achieve fast experimental results on real-world
graphs rather than to achieve provable efficiency
bounds across all graphs~\cite{mccoll2013new}. The worst-case work and
depth of their algorithm is the same as that of a breadth-first search on the graph. Simsiri et
al.\ give a work-efficient, logarithmic-depth algorithm for batch
incremental (no deletions) dynamic connectivity~\cite{simsiri2016work}. Kopelowitz et
al.\  have recently shown that the sparsified version of
Frederickson's algorithm~\cite{frederickson1985data, eppstein1997sparsification}
can be parallelized nearly work-efficiently for a single
update~\cite{kopelowitz2018mst}.
However, they do not consider parallelizing the algorithm when
processing batches of edge updates.

\section{Preliminaries}\label{sec:prelims}
In this paper we analyze our algorithms on the
\defn{Multi-Threaded Random-Access Machine} (\mpram{}), a simple
work-depth model that is closely related to the Parallel Random-Access Machine (PRAM) but more closely
models current machines and programming paradigms that are
asynchronous and support dynamic forking. We define the
model in Appendix~\ref{sec:app-model} and refer the interested reader
to~\cite{blelloch2018introduction-draft} for more details.
Our efficiency bounds are stated in terms of work and depth, where
\defn{work} is the total number of vertices in the \process{} directed acyclic graph (DAG) and
where \defn{depth} (\defn{span}) is the length of the longest path in
the DAG~\cite{blelloch1996programming}.

We design algorithms using \emph{nested fork-join parallelism} in which a
procedure can \emph{fork} off another procedure call to run in parallel and
then wait for forked calls to complete with a \emph{join}
synchronization~\cite{blelloch1996programming}.  In our implementations, we use
Cilk Plus~\cite{leiserson2009cilk++} for fork-join parallelism. We
borrow its use of \emph{spawn} to mean ``fork'' and \emph{sync} to
mean ``join'' to disambiguate from the other sense in which we use
``join'' in this work (that is, as an operation that concatenates two
sequences).

Our algorithms only require the compare-and-swap atomic primitive,
which is widely available on modern multicores.  A
$\textproc{compare-and-swap}(\&x, o, n)$ (CAS) instruction takes a memory
location $x$ and atomically updates the value at location $x$ to $n$
if the value is currently $o$, returning $\codevar{true}$ if it
succeeds and $\codevar{false}$ otherwise.

\myparagraph{Parallel Primitives}
The following parallel procedures are used throughout the paper.

A \defn{semisort} takes an input array of
elements, where each element has an associated key, and reorders the
elements so that elements with equal keys are contiguous. Elements
with different keys are not necessarily ordered. The purpose is to
collect equal keys together rather than sort them. Semisorting a
sequence of length $n$ can be performed in $O(n)$ expected work and
$O(\log n)$ depth with high probability assuming access to a uniformly
random hash function mapping keys to integers in the range $\left[1,
n^{O(1)}\right]$~\cite{reif2000parallel, gu2015top}.

A \defn{parallel dictionary} data structure supports batch insertion,
batch deletion, and batch lookup of elements from some universe with
hashing.  Gil
et al.\ describe a parallel dictionary that
uses linear space and achieves $O(k)$ work and $O(\log^* k)$ depth with
high probability for a batch of $k$ operations~\cite{gil1991towards}.

The \defn{list tail-finding} problem is to assign each node in a linked
list a pointer to the last node in the list. There are
also other variants referred to as list ranking in the literature in
which we wish to compute the distance to the last node.
Many solutions for this problem that have $O(n)$ work and $O(\log n)$
depth exist~\cite{cole1988approximate, anderson1988deterministic,
cole1989faster, anderson1990simple}.

The \defn{pack} operation takes an $n$-length sequence $A$ and an
$n$-length sequence $B$ of booleans as input. The output is a sequence
$A'$ of all the elements $a \in A$ such that the the corresponding
entry in $B$ is $\codevar{true}$. The elements of $A'$ appear in the
same order that they appear in $A$. Packing can be easily implemented
in $O(n)$ work and $O(\log n)$ depth~\cite{jaja1992introduction}.

\section{Sequences and Parallel Skip Lists}\label{sec:skip-lists}
We start by first specifying a high-level interface for batch-parallel
sequences. We then describe our batch-parallel skip lists which
implement the interface, and finally, end by discussing how our data
structure can be extended to support augmentation.

\subsection{Batch-Parallel Sequence Interface}

The goal of a batch-parallel sequence data structure is to represent a
collection of sequences under batches of parallel operations that split and join
the sequences. To \emph{join} two sequences is to concatenate them together. To
\emph{split} a sequence $A$ at element $x$ is to separate the sequence into two
subsequences, the first of which consists of all elements in $A$ before and
including $x$, the second of which consists of all elements after $x$.

\myparagraph{Sequences}
We now give a formal description of the interface for sequences. The data
structure supports the following functions:
\begin{itemize}
  \item \textbf{$\textproc{BatchJoin}(\set{(x_1, y_1), \dots, (x_k, y_k)})$}
    takes an array of tuples where the $i$-th tuple is a pointer to the last
    element $x_i$ of one sequence and a pointer to the first element $y_i$ of a
    second sequence. For each tuple, the first sequence is concatenated with the
    second sequence. For any distinct tuples $(x_i, y_i)$ and $(x_j, y_j)$
    in the input, we must have $x_i \not= x_j$ and $y_i \not= y_j$.
  \item \textbf{$\textproc{BatchSplit}(\set{x_1, \ldots, x_k}$} takes an array of
    pointers to elements and, for each element $x_i$, breaks the sequence
    immediately after $x_i$.
  \item \textbf{$\textproc{BatchFindRep}(\set{x_1, \ldots, x_k}$} takes an array of
    pointers to elements. It returns an array where the $i$-th entry is the
    \emph{representative} of the sequence in which $x_i$ lives. The
    representative is defined so that $\on{representative}(u) =
    \on{representative}(v)$ if and only if $u$ and $v$ live in the same
    sequence. Representatives are invalidated after sequences are modified.
\end{itemize}

\myparagraph{Augmented Sequences}
To augment a sequence, we take an associative function $f : D^2 \to D$ where $D$
is an arbitrary domain of values. A value from $D$ is assigned to each element
in the sequence, $A$. An augmented sequence data structure supports querying for
the value of $f$ over contiguous subsequences of $A$. Specifically, our
interface for augmented sequences extends the interface for unaugmented
sequences with the following functions:
\begin{itemize}
  \item \textbf{$\textproc{BatchUpdateValue}(\set{(x_1, a_1), \dots, (x_k, a_k)})$}
    takes an array of tuples where the $i$-th tuple contains a pointer to an
    element $x_i$ and a new value $a_i \in D$ for the element. The value for $x_i$ is
    set to $a_i$ in the sequence.
  \item \textbf{$\textproc{BatchQueryValue}(\set{(x_1, y_1), \dots, (x_k, y_k)})$}
    takes an array of $k$ tuples where the $i$-th tuple contains pointers to
    elements $x_i$ and $y_i$. The return value is an array where the $i$-th
    entry holds the value of $f$ applied over the subsequence between $x_i$ and
    $y_i$ inclusive. For $1 \le i \le k$, $x_i$ and $y_i$ must be
    elements in the same sequence, and $y_i$ must appear after $x_i$ in the
    sequence.
\end{itemize}

\subsection{Skip Lists}
Skip lists are a simple randomized data structure that can be used to
represent sequences~\cite{pugh1990skip}.
To represent a sequence, skip lists assign a \emph{height} to each element of the
sequence, where each height is drawn independently from a geometric
distribution. The $\ell$-th level of a skip list consists of a linked list over
the subsequence formed by all elements of height at least $\ell$. This structure
allows efficient search. Figure~\ref{fig:sl-example} shows an example skip list.

\begin{figure} \centering
    \includegraphics[keepaspectratio=true, width=0.6\textwidth]{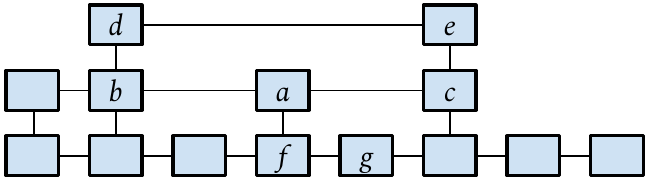}
    \caption[Example skip list] {An example skip list
    over a sequence of eight elements. On the bottom are all the level-$1$ nodes.}
    \label{fig:sl-example}
\end{figure}

For an element $x$ of height $h$, we allocate a node $v_i$ for every level
$i=1,2,\dots,h$. Each node has four pointers \textproc{left}, \textproc{right},
\textproc{up}, and \textproc{down}. We set $v_i \rightarrow \textproc{up} =
v_{i+1}$ and $v_i \rightarrow \textproc{down} = v_{i-1}$ for each $i$ to connect
between levels. We set $v_i \rightarrow \textproc{right}$ to the $i$-th node of
the next element of height at least $i$ and similarly $v_i \rightarrow
\textproc{left}$ to the $i$-th node of the previous element of height at least
$i$.

Our skip lists support \emph{cyclicity}, which is to say that our
algorithms are valid even if we link the tail and head of a skip list
together. Though this is not conventionally done with sequence data
structures, we will find it useful for representing Euler tours of
graphs in Section~\ref{sec:ett} since Euler tours are naturally cyclic
sequences. We cannot join upon cyclic sequences, but splitting a
cyclic sequence at element $x$ corresponds to unraveling it into a
linear sequence with its last element being $x$.
Figure~\ref{fig:sl-join-split} illustrates joining and splitting on
our skip lists.

\myparagraph{Definitions}
We now introduce definitions that describe the relationship between nodes. Say we
have a node $v$ that represents element $x$ at some level $i$. We call $v \rightarrow
\textproc{right}$ $v$'s \emph{successor}. Similarly,
$v \rightarrow \textproc{left}$ is its \emph{predecessor}. We call $v
\rightarrow \textproc{up}$ its \emph{direct parent} and $v \rightarrow
\textproc{down}$ its \emph{direct child}. For example, in
Figure~\ref{fig:sl-example}, consider node $a$. Its predecessor is $b$, its
successor is $c$, its direct child is $f$, and it has no direct parent.

The $\emph{left parent}$ is the level-$(i+1)$ node of the latest element
preceding and including $x$ that has height at least $i+1$. The $\emph{right
parent}$ is defined symmetrically. Under this definition, if $v$ has a direct
parent, then its left and right parents are both its direct parent. When we
refer to $v$'s parent, we refer to its left parent. In
Figure~\ref{fig:sl-example}, $a$'s (left) parent is $d$, and $a$'s right parent
is $e$. The \emph{(left) ancestors} consist of $v$'s parent, $v$'s parent's
parent, and so on, and similarly for $v$'s \emph{right ancestors}. Thus the
ancestors for both $f$ and $g$ in Figure~\ref{fig:sl-example} are $a$ and $d$.
A $\emph{child}$ is inverse to a parent, and a $\emph{descendant}$ is inverse to
an ancestor.

The following definitions describe the relationship between the links connecting
nodes.  The $\emph{parent}$ of a link between $v$ and its successor is the link
between $v$'s parent and its successor. Similarly, the $\emph{ancestors}$ of the
link are links between $v$'s ancestors and their successors. The \emph{children}
of the link are the links between $v$'s children and their successors.

\begin{figure}
    \centering
    \includegraphics[keepaspectratio=true, width=0.7\textwidth]{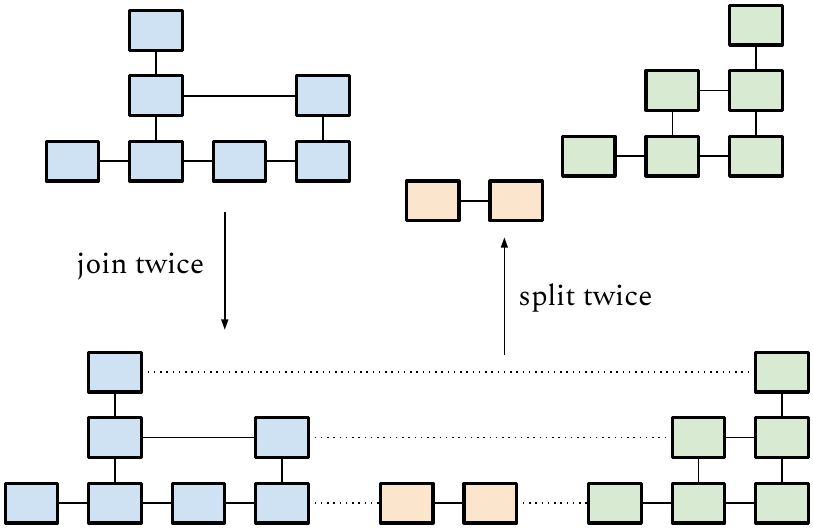}
    \caption[Joins and splits on skip lists] {Joins and
    splits on skip lists.}
    \label{fig:sl-join-split}
\end{figure}

\begin{figure}
    \centering
    \includegraphics[keepaspectratio=true, width=0.6\textwidth]{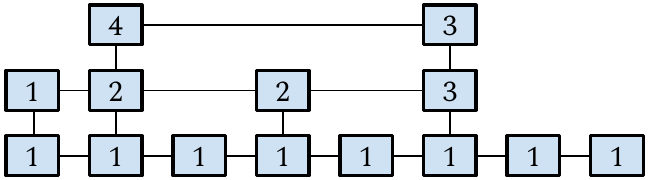}
    \caption[Skip list augmented with size] {Each node in this skip list is augmented with a size value, summing
    all the values of its children.}
    \label{fig:size-augment}
\end{figure}

\myparagraph{Joins, Splits, and Augmentation on Skip Lists}
Recall that in an augmented sequence, we take an associative function $f : D^2
\to D$ for some domain $D$. Each element in the sequence $A$ is
assigned some value from $D$. By storing these values in the bottom
level of our skip list and storing partial ``sums'' at higher levels, we can
compute $f$ over contiguous subsequences of $A$ in logarithmic time.
For instance, in Figure~\ref{fig:size-augment}, we assign the value
$1$ to every element and choose $f: \N^2 \to \N$ to be the sum
function. For each node $v$, we store the sum of the values of $v$'s
children. By looking at $O(\log n)$ nodes, we can then compute the
size of the  sequence. These augmented values are also easy to
maintain as the skip list undergoes joins and splits.

Our skip lists support batch joins, batch splits, batch point updates
of augmented values, and batch finding representatives in $O(k \log (1
+ n/k))$ work in expectation and $O(\log n)$ depth with high
probability, where $k$ is the batch size and $n$ is the number of
elements in the lists. We analyze efficiency in
Appendix~\ref{appendix:skip-list-efficiency}.

This improves on the $\Theta(k \log n)$ expected work bound achieved
by conventional sequential joins and splits on augmented skip lists.
Intuitively, the reason we can achieve improved work-bounds is that if
a node has many updated descendants, our algorithm updates its
augmented value only once rather than multiple times.

\subsection{Algorithms for unaugmented lists}\label{subsec:skip-list-description}

\begin{algorithm}
  \caption{Creates an element with height distributed according to
  $\on{Geometric}(1-p)$.} \label{alg:sl-create}
\small
\begin{algorithmic}[1]
\Procedure{CreateNode}{\null}
  \State \textbf{allocate} $\codevar{node}$
  \State With probability $p$:
  \Indent
    \State $\codevar{node} \rightarrow \textproc{up} = \Call{CreateNode}{\null}$
    \State $\codevar{node} \rightarrow \textproc{up} \rightarrow \textproc{down} = \codevar{node}$
  \EndIndent
  \State \Return $\codevar{node}$
\EndProcedure
\end{algorithmic}
\end{algorithm}

We begin by describing unaugmented skip lists.  For creating elements in our
skip list, we fix a probability $0 < p < 1$ representing the expected proportion
of nodes at a particular level that have a direct parent at the next level. We
generate heights of elements by allocating a node and giving each node a direct
parent with probability $p$ independently, as seen in
Algorithm~\ref{alg:sl-create}. This is equivalent to drawing heights from a
$\on{Geometric}(1-p)$ distribution.

\begin{algorithm}
\caption{Searches for the left parent of the input node. The mirror function
  \textproc{SearchRight} is defined symmetrically.}\label{alg:search-left}
\small
\begin{algorithmic}[1]
\Procedure{SearchLeft}{$\codevar{v}$}
  \State $\codevar{current} = \codevar{v}$
  \While {$\codevar{current} \rightarrow \textproc{up} = \codevar{null}$}
    \State $\codevar{current}$ = $\codevar{current} \rightarrow \textproc{left}$
      \label{line:search-update-current}
    \If {$\codevar{current} = \codevar{null}$ {\bf or} $\codevar{current} = \codevar{v}$}
      \label{line:search-conditional}
      \State \Return{$\codevar{null}$}
    \EndIf
  \EndWhile
  \State \Return $\codevar{current} \rightarrow \textproc{up}$
\EndProcedure
\end{algorithmic}
\end{algorithm}

We give pseudocode for \textproc{Join} and \textproc{Split} over
unaugmented lists in Algorithms~\ref{alg:join} and~\ref{alg:split}
respectively. To perform a batch of joins, we simply call
\textproc{Join} on each join operation in the batch concurrently, and
similarly for a batch of splits. As each batch of splits and joins
must be run in separate phases, our data structure is
\emph{phase-concurrent} over joins and splits, which is more general than being
batch-parallel~\cite{shun2014phase}.

Both algorithms employ two simple helper procedures,
\textproc{SearchLeft} and \textproc{SearchRight}, for finding the left
and right parents of a node. We show \textproc{SearchLeft} in
Algorithm~\ref{alg:search-left}, and \textproc{SearchRight} is
implemented symmetrically. Note that these procedures avoid looping
forever on cyclic skip lists.

\begin{algorithm}
\caption{Joins two lists together given their endpoints.} \label{alg:join}
\small
\begin{algorithmic}[1]
\Procedure{Join}{$\codevar{v}_L$, $\codevar{v}_R$}
 \If {$\Call{CAS}{\&\codevar{v}_L\rightarrow \textproc{right}, \codevar{null},
   \codevar{v}_R}$}
     \label{line:join-set-right}
   \State $\codevar{v}_R \rightarrow \textproc{left} = \codevar{v}_L$
     \label{line:join-set-left}
   \State $\codevar{parent}_{L}$ = \Call{SearchLeft}{$v_{L}$}
     \label{line:join-search-left}
   \State $\codevar{parent}_{R}$ = \Call{SearchRight}{$v_{R}$}
     \label{line:join-search-right}
   \If {$\codevar{parent}_{L} \neq \codevar{null}$ {\bf and } $\codevar{parent}_{R} \neq \codevar{null}$}
     \label{line:join-if-search}
     \State \Call{Join}{$\codevar{parent}_{L}, \codevar{parent}_{R}$}
       \label{line:join-recurse}
   \EndIf
 \EndIf
\EndProcedure
\end{algorithmic}
\end{algorithm}

\myparagraph{Join}
Recall that the definition of \textproc{Join} takes a pointer to the
last element of one list and a pointer to the first element of a
second list and  concatenates the first list with the second list.
Starting at the bottom level, our algorithm links the given nodes,
searches upwards to find parents to link at the next level, and
repeats. We set the link with a CAS, and if the CAS is lost, the
algorithm quits. This permits only one thread to set a particular
link, preventing repeated work.

\begin{theorem}\label{thm:join-correctness}
  Let $B$ be a set of valid \textproc{Join} inputs. Then calling
  \textproc{Join} concurrently over the inputs in $B$ gives the same result as
  joining over the inputs in $B$ sequentially.
\end{theorem}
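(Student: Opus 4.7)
The plan is to prove the claim by induction on skip list levels $\ell = 1, 2, \dots$, showing that the set of level-$\ell$ links ultimately established by the concurrent execution matches that of any sequential execution of the batch $B$. At level $1$ the validity condition guarantees the $x_i$ are pairwise distinct and each $x_i \rightarrow \textproc{right}$ is initially $\codevar{null}$ (since $x_i$ is the last element of its sequence), so every CAS on line~\ref{line:join-set-right} has a unique uncontested target and succeeds; likewise the writes on line~\ref{line:join-set-left} target pairwise distinct $y_i$'s. Hence the level-$1$ structure agrees with the sequential result.

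For the inductive step I would first prove the sub-lemma: \emph{at every moment during the execution, each level-$\ell$ pointer is either $\codevar{null}$ or equal to its value in the final sequential structure.} Taking as inductive hypothesis that every level-$\ell$ \textproc{Join}$(v_L, v_R)$ call issued by the algorithm uses a pair $(v_L, v_R)$ that is an edge of the final structure, the CAS on line~\ref{line:join-set-right} only writes correct values into $\codevar{null}$ fields, and multiple writers to $v_R \rightarrow \textproc{left}$ all write the same correct value because $v_R$'s final left neighbor is unique. The sub-lemma immediately yields the key search property: $\textproc{SearchLeft}(v)$ at level $\ell$ either returns $v$'s correct left parent in the final structure, or returns $\codevar{null}$ (the latter only when some left pointer along the chain has not yet been installed), and symmetrically for \textproc{SearchRight}. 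Combined with base-case-style reasoning applied at level $\ell+1$, this shows every \textproc{Join} call at level $\ell+1$ uses a correct pair and hence no incorrect level-$(\ell+1)$ link is ever established.

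To complete the argument I must show that every level-$(\ell+1)$ link $L = (v_L^\star, v_R^\star)$ in the final structure actually does get established by some successful CAS. The argument is temporal: among the threads whose successful level-$\ell$ CASes lie within $L$'s sub-chain, I would pick a thread $T^\star$ whose level-$\ell$ left-pointer write on line~\ref{line:join-set-left} occurs last in real time; by that moment every other relevant level-$\ell$ left-pointer write has already happened, so the needed portion of the level-$\ell$ left-chain (from $T^\star$'s $v_R$ toward $v_L^\star$) is fully installed, and a symmetric choice handles the right-chain. Hence $T^\star$'s subsequent \textproc{SearchLeft} and \textproc{SearchRight} both succeed, returning $v_L^\star$ and $v_R^\star$ by the sub-lemma, and $T^\star$ invokes $\textproc{Join}(v_L^\star, v_R^\star)$ at level $\ell+1$; if $L$ has not already been established by an earlier thread, $T^\star$'s CAS does so. The main obstacle I expect is making this ``last-thread'' temporal argument rigorous, since the left and right searches require two possibly different portions of the level-$\ell$ list to be connected; I believe a careful definition of the relevant thread set --- namely those winners of level-$\ell$ CASes under the sub-chain descending from $L$ --- together with a nested induction down to level $1$, will let one exhibit a single thread whose \textproc{Join} call at level $\ell+1$ witnesses $L$.
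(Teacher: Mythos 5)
Your proposal follows essentially the same route as the paper: a level-by-level induction with a safety half (no wrong link is ever written, because pointers only ever go from $\codevar{null}$ to their final value) and a liveness half (every needed level-$(\ell+1)$ link is witnessed by the execution that is \emph{last} to finish its level-$\ell$ left-pointer write on the relevant sub-chain). The one obstacle you flag --- exhibiting a \emph{single} thread for which both the left-chain and the right-chain are simultaneously complete --- is resolved exactly by the structure of Algorithm~\ref{alg:join}: every execution performs the CAS on line~\ref{line:join-set-right} before the write on line~\ref{line:join-set-left}, so at the moment the last left-pointer write on the sub-chain completes, all right pointers on that sub-chain are already installed; hence that one thread's \textproc{SearchLeft} and \textproc{SearchRight} both succeed, which is precisely the paper's argument. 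The paper's formal version (Appendix~\ref{appendix:join-correctness}) packages the same idea as an invariant over a linearized execution using a set of pending links and a $\textsc{ToAdd}$ closure operator, but the underlying decomposition is the one you describe.
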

\begin{proof}
(Proof sketch) We argue inductively level-by-level that all
necessary links are added and no unnecessary links are added. For the
base case, at the bottom level, the links we add are exactly those
given as input to the algorithm, which are the necessary links to add
at that level. For the inductive step, assume that the correct links
will be added on level $i$.  Consider any link $\ell$ from nodes $v_L$
to $v_R$ on level $i + 1$ that should be added. In order for this to
be a link we need to add, there must be a rightward path from $v_L$'s
direct child to $v_R$'s direct child once all links on level $i$ are
added. Then consider the last execution of \textproc{join} on level
$i$ to add a link on that path by finishing
line~\ref{line:join-set-left} of Algorithm~\ref{alg:join}. That
execution will have a complete path to find parents $v_L$ and $v_R$
when searching and thus will find $\ell$ as a link to add.
Conversely, any level-$(i+1)$ link from nodes $v_L$ to $v_R$ found by
a join execution was found via a complete path (albeit perhaps
temporarily missing some $\textproc{left}$ pointers due to some
executions of join completing line~\ref{line:join-set-right} but not
yet completing line~\ref{line:join-set-left}) between $v_L$'s direct
child and $v_R$'s direct child, which indicates that this link should
be added. We present a formal proof using this idea in
Appendix~\ref{appendix:join-correctness}.
\end{proof}

\begin{algorithm}
\caption{Separates the input node from its successor.} \label{alg:split}
\small
\begin{algorithmic}[1]
\Procedure{Split}{$\codevar{v}$} \label{line:split-begin}
\State $\codevar{ngh} = \codevar{v}\rightarrow \textproc{right}$
\If {$\codevar{ngh} \neq \codevar{null}$ {\bf and}
  $\Call{CAS}{\&\codevar{v}\rightarrow \textproc{right}, \codevar{ngh},
  \codevar{null}}$} \label{line:split-set-right}
  \State $\codevar{ngh}\rightarrow \textproc{left} = \codevar{null}$
    \label{line:split-set-left}
  \State $\codevar{parent} = \Call{SearchLeft}{v}$
    \label{line:split-search}
  \If {$\codevar{parent} \neq \codevar{null}$}
    \State $\Call{Split}{\codevar{parent}}$
    \label{line:split-recurse}
  \EndIf
\EndIf
\EndProcedure
\end{algorithmic}
\end{algorithm}

\myparagraph{Split}
\textproc{Split} takes a pointer to an element and breaks the list right
after that element. Similar to join, it cuts the link at the bottom level and
then loops in searching upwards to find parent links to remove at higher
levels. Like \textproc{Join}, this uses CAS to avoid duplicate work.

\begin{theorem}\label{thm:split-correctness}
  Let $B$ be a set of elements. Then calling \textproc{Split} concurrently over
  the elements in $B$ gives the same result as splitting over the elements in
  $B$ sequentially.
\end{theorem}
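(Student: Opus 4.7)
(Proof sketch) The plan is to mirror the inductive level-by-level argument used in Theorem~\ref{thm:join-correctness}, showing that at every level the set of links removed by the concurrent execution agrees with the set removed by the sequential execution. At level 1, the algorithm cuts exactly the right-links of the elements in $B$, matching sequential execution; this handles the base case. For the inductive step, I assume correctness at level $i$ and fix an arbitrary level-$(i+1)$ link $\ell$ between adjacent nodes $v_L$ and $v_R$ with direct children $c_L$ and $c_R$. Because $v_L$ and $v_R$ are adjacent at level $i+1$, every node strictly between $c_L$ and $c_R$ has height exactly $i$ and therefore no \textproc{up} pointer. Sequential execution cuts $\ell$ precisely when at least one split point lies in the level-$i$ interval $[c_L, c_R)$.

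To show that every link that should be cut is cut, I would let $p$ be the leftmost split point in $[c_L, c_R)$ and argue that, regardless of scheduling, \textproc{SearchLeft}$(p)$ returns $v_L$. No other split point lies strictly between $c_L$ and $p$, so the left pointers along that stretch are never nullified, and since the intervening nodes have no \textproc{up} pointer the walk necessarily halts at $c_L$, whose \textproc{up} is $v_L$. Hence \textproc{Split}$(p)$ recurses into \textproc{Split}$(v_L)$, which cuts $\ell$. For the converse, if $[c_L, c_R)$ contains no split point, I would show that no level-$i$ call propagates to $v_L$: calls originating at a node $v \ge c_R$ halt at $c_R$'s \textproc{up} pointer $v_R$ or earlier, and calls originating at a node $v < c_L$ only walk further leftward, so neither class can return $v_L$.

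The main obstacle will be the concurrent interleavings of \textproc{SearchLeft} with the line~\ref{line:split-set-left} writes performed by other splits. When $p$ is not the leftmost split point in $[c_L, c_R)$, its search may race past a left pointer that has not yet been nullified and also end up returning $v_L$, so several invocations of \textproc{Split}$(v_L)$ may be attempted concurrently. I would deal with this using the CAS on line~\ref{line:split-set-right}: exactly one invocation cuts $\ell$ and recurses upward, while the others fail the CAS and return immediately without modifying anything, so no spurious link is cut and no extra upward propagation occurs. Termination in the cyclic case is handled by the cycle-check on line~\ref{line:search-conditional} in the same way as for \textproc{Join}. Combining these observations completes the inductive step; the formal argument would be deferred to the appendix alongside the \textproc{Join} proof.
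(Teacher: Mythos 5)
Your proposal is correct and follows essentially the same route as the paper's argument: a level-by-level induction in which the key observation is that the leftmost split point in the interval $[c_L, c_R)$ is guaranteed to reach $v_L$ via \textproc{SearchLeft} (because no left pointer on its path is ever nullified), while the CAS on line~\ref{line:split-set-right} prevents duplicate cuts from racing searches that also happen to find $v_L$. The only stylistic difference is that the paper's appendix formalizes this via a time-indexed invariant over the linearized execution (tracking sets of in-flight splits and reachable ancestors) rather than directly formalizing the level induction, but the underlying ideas coincide.
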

\begin{proof}
(Proof sketch) Like in the proof sketch of Theorem~\ref{thm:join-correctness},
we look at the links that are removed inductively level-by-level. The argument
is similar, except that in the inductive step, to see that a link $\ell$ on
level $i + 1$ from nodes $\codevar{v}_L$ to $\codevar{v}_R$ that should be
removed will indeed be removed by the phase of splits, we note that the
leftmost split on the path from $\codevar{v}_L \rightarrow \textproc{down}$ to
$\codevar{v}_R \rightarrow \textproc{down}$ will be able to find parent
$\codevar{v}_L$ in its $\textproc{SearchLeft}$ call. We present a formal proof
using this idea in Appendix~\ref{appendix:split-correctness}.
\end{proof}

\myparagraph{Finding representative nodes}
\begin{algorithm}
\caption{Finds a representative node of the list that the input node lives in.} \label{alg:find-rep}
\small
\begin{algorithmic}[1]
\Procedure{FindRep}{$\codevar{v}$}
  \While {$\textproc{SearchRight}(\codevar{v}) \not= \codevar{null}$}
    \label{line:find-rep-search-right-loop}
    \State $\codevar{v} = \textproc{SearchRight}(\codevar{v})$
  \EndWhile
    \label{line:find-rep-search-right-loop-end}
  \While {$\textproc{SearchLeft}(\codevar{v}) \not= \codevar{null}$}
    \State $\codevar{v} = \textproc{SearchLeft}(\codevar{v})$
  \EndWhile
  \State $\codevar{rep} = \codevar{v}$
  \While {true}
    \If {$\codevar{v} \rightarrow \textproc{left} = \codevar{null}$} \Comment{List is acyclic}
      \State \Return $\codevar{v}$
    \EndIf
    \State $\codevar{v} = \codevar{v} \rightarrow \textproc{left}$
    \If {$\codevar{v} = \codevar{rep}$} \Comment{List is cyclic}
      \State \Return $\codevar{rep}$
    \EndIf
    \If {$\codevar{v} < \codevar{rep}$}
      \State $\codevar{rep} = \codevar{v}$
    \EndIf
  \EndWhile
\EndProcedure
\end{algorithmic}
\end{algorithm}

A simple phase-concurrent implementation of \textproc{FindRep} that takes $O(k
\log n)$ expected work for $k$ concurrent calls is to start at the
input node and walk to the top level of the list. Then on the top
level, for an acyclic list, we return the leftmost node, or for a
cyclic list, we return the node with the lowest memory address. This is shown in
Algorithm~\ref{alg:find-rep}.

However, if we are given a batch of $k$ calls up front, we can in fact achieve
$O(k \log (1 + n/k))$ expected work and $O(\log n)$ depth with high probability.
The idea is that each call of \textproc{FindRep} takes some path up the skip
list to the top level, and calls whose paths intersect somewhere can be combined
at that point to avoid duplicate work. Then the return value gets propagated
back down to both original calls. The code would look similar to the code for
batch updating augmented values for augmented skip lists
(Subsection~\ref{subsec:augment-skip-lists}) in
Algorithm~\ref{alg:batch-update}. We omit the full details.

\subsection{Algorithm for augmented lists}\label{subsec:augment-skip-lists}

We now describe how to augment our skip lists. In addition to its four
pointers, each node is given a value \textproc{val} from some domain
$D$ and a boolean \textproc{needs\_update}. We provide an
associative function $f: D^2 \to D$ and, for each element in the list, a value
from $D$. We assign values to $\textproc{val}$ on nodes at the bottom
level and then compute \textproc{val} at higher levels by applying $f$ over
nodes' children. The boolean \textproc{needs\_update} is initialized to
$\codevar{false}$ and is used to mark nodes whose values need updating.

\begin{algorithm}
  \caption{Helper function for \textproc{BatchUpdateValues} that updates the
  augmented value for $v$ and all its descendants.}
  \label{alg:batch-update-helper-top-down}
\small
\begin{algorithmic}[1]
  \Procedure{UpdateTopDown}{$v$}
    \State $v \rightarrow \textproc{needs\_update} = \codevar{false}$
    \If {$v \rightarrow \textproc{down} = \codevar{null}$} \Comment Reached bottom level
      \State \Return
    \EndIf
    \State $\codevar{current} = v \rightarrow \textproc{down}$
      \label{line:updatetopdown-child-update-start}
    \Do
      \If{$\codevar{current} \rightarrow \textproc{needs\_update}$}
        \State \textbf{spawn} $\Call{UpdateTopDown}{current}$
      \EndIf
      \State $\codevar{current} = \codevar{current} \rightarrow \textproc{right}$
    \doWhile{$\codevar{current} \not= \codevar{null}$ \textbf{and}
             $\codevar{current} \rightarrow \textproc{up} = \codevar{null}$}
      \label{line:updatetopdown-child-update-end}
    \State \textbf{sync}
      \label{line:updatetopdown-self-update-start}
    \State $\codevar{sum} = v \rightarrow \textproc{down} \rightarrow \textproc{val}$
    \State $\codevar{current} = v \rightarrow \textproc{down} \rightarrow \textproc{right}$
    \While{$\codevar{current} \not= \codevar{null}$ \textbf{and}
             $\codevar{current} \rightarrow \textproc{up} = \codevar{null}$}
      \State $\codevar{sum} = f(\codevar{sum}, \codevar{current} \rightarrow \textproc{val})$
      \State $\codevar{current} = \codevar{current} \rightarrow \textproc{right}$
    \EndWhile
    \State $v \rightarrow \textproc{val} = \codevar{sum}$
      \label{line:updatetopdown-self-update-end}
  \EndProcedure
\end{algorithmic}
\end{algorithm}

\begin{algorithm}
  \caption{Takes a batch of (node, value) pairs, updates each node with its
  associated value, and updates other affected augmented values stored throughout the list.}
  \label{alg:batch-update}
\small
\begin{algorithmic}[1]
  \Procedure{BatchUpdateValues}{$\set{(v_1, a_1), \dots, (v_k, a_k)}$}
    \State $\codevar{top} = \set{\codevar{null}, \codevar{null}, \dots, \codevar{null}}$ \Comment{$k$-length array}
    \label{line:batch-update-cas-phase-start}
    \For{$i \in \set{1, \dots, k}$} \textbf{in parallel}
      \State $v_i \rightarrow \textproc{val} = a_i$
      \State $\codevar{current} = v_i$
        \label{line:batch-update-update-bottom}
      \While {$\Call{CAS}{\&\codevar{current}\rightarrow\textproc{needs\_update}, \codevar{false}, \codevar{true}}$}
        \label{line:batch-update-cas}
        \State $\codevar{parent} = \Call{SearchLeft}{current}$
        \If{$\codevar{parent} = \codevar{null}$}
          \State $\codevar{top}[i] = current$
          \State \textbf{break}
        \EndIf
        \State $\codevar{current} = \codevar{parent}$
      \EndWhile
    \EndFor
    \label{line:batch-update-cas-phase-end}
    \For{$i \in \set{1, \dots, k}$} \textbf{in parallel}
      \label{line:batch-update-helper-start}
      \If{$\codevar{top}[i] \not= \codevar{null}$}
        \State $\Call{UpdateTopDown}{\codevar{top}[i]}$
      \EndIf
    \EndFor
      \label{line:batch-update-helper-end}
  \EndProcedure
\end{algorithmic}
\end{algorithm}

We give the main algorithm \textproc{BatchUpdateValues} for batch augmented value update in
Algorithm~\ref{alg:batch-update}. This takes a set of nodes at the bottom
level along with values to give to the associated elements.
For each node in the set, we start by updating its value
(line~\ref{line:batch-update-update-bottom}). Then each of its
ancestors have values that need updating, so we walk up its ancestors, CASing on
each ancestor's \textproc{needs\_update} variable (line~\ref{line:batch-update-cas}). If an execution loses a
CAS, then it may quit because some other execution will take care of all the
node's ancestors.

Now over all the input nodes that won all CASes on their ancestors, we know the
union of their topmost ancestors' descendants contain all the input nodes.
By calling the helper function \textproc{UpdateTopDown}
(Algorithm~\ref{alg:batch-update-helper-top-down}) on every such topmost
ancestor in
lines~\ref{line:batch-update-helper-start}--\ref{line:batch-update-helper-end},
we traverse back down and update these descendants' augmented values. Given a
node, this helper function calls itself recursively on all the node's children
$c$ who need an update as indicated by $c \rightarrow \textproc{needs\_update}$
(lines~\ref{line:updatetopdown-child-update-start}--\ref{line:updatetopdown-child-update-end}).
Then, after all the childrens' values are updated, we may update the original
node's value
(lines~\ref{line:updatetopdown-self-update-start}--\ref{line:updatetopdown-self-update-end}).

\begin{algorithm}
  \caption{Batch join for augmented skip lists.}
  \label{alg:batch-join}
\small
\begin{algorithmic}[1]
  \Procedure{BatchJoin}{$\set{(l_1, r_1), (l_2, r_2), \dots, (l_k, r_k)}$}
    \For{$i \in \set{1, \dots, k}$} \textbf{in parallel}
      \State $\Call{Join}{l_i, r_i}$
    \EndFor
    \State $\Call{BatchUpdateValues}{\set{(l_1, l_1 \rightarrow
      \textproc{val}), \dots, (l_k, l_k \rightarrow \textproc{val})}}$
  \EndProcedure
\end{algorithmic}
\end{algorithm}

\begin{algorithm}
  \caption{Batch split for augmented skip lists.}
  \label{alg:batch-split}
\small
\begin{algorithmic}[1]
  \Procedure{BatchSplit}{$\set{v_1, v_2, \dots, v_k}$}
    \For{$i \in \set{1, \dots, k}$} \textbf{in parallel}
      \State $\Call{Split}{v_i}$
    \EndFor
    \State $\Call{BatchUpdateValues}{\set{(v_1, v_1 \rightarrow
      \textproc{val}), \dots, (v_k, v_k \rightarrow \textproc{val})}}$
  \EndProcedure
\end{algorithmic}
\end{algorithm}

With this algorithm for batch augmented value update, batch joins
(Algorithm~\ref{alg:batch-join}) and batch splits
(Algorithm~\ref{alg:batch-split}) are simple. We first perform all the joins or
splits. Then we batch update on the nodes we joined or split on. We keep all the
values on the bottom level the same, but the update fixes all the values on the
higher levels that are changed by adding or removing links.

\begin{algorithm}
  \caption{Query for the augmented value over the subsequence between $v_L$ and
  $v_R$ inclusive.  Here we let $f(\codevar{none}, x) = x$ for all $x \in D$.}
  \label{alg:aug-query-value}
\small
\begin{algorithmic}[1]
  \Procedure{QueryValue}{$v_L$, $v_R$}
  \State $\codevar{sum}_L = \codevar{none}$
  \State $\codevar{sum}_R = v_R \rightarrow \textproc{val}$
  \While{$v_L \not= v_R$}
    \While{$v_L \rightarrow \textproc{up} \not= \codevar{null}$ \textbf{and}
           $v_R \rightarrow \textproc{up} \not= \codevar{null}$}
      \State $v_L = v_L \rightarrow \textproc{up}$
      \State $v_R = v_R \rightarrow \textproc{up}$
    \EndWhile
    \If {$v_L \rightarrow \textproc{up} = \codevar{null}$}
      \State $\codevar{sum}_L = f(\codevar{sum}_L, v_L \rightarrow \textproc{val})$
      \State $v_L = v_L \rightarrow \textproc{right}$
    \Else
      \State $v_R = v_R \rightarrow \textproc{left}$
      \State $\codevar{sum}_R = f(v_R \rightarrow \textproc{val}, \codevar{sum}_R)$
    \EndIf
  \EndWhile
  \Return $f(\codevar{sum}_L, \codevar{sum}_R)$
  \EndProcedure
\end{algorithmic}
\end{algorithm}

A batch of $k$ queries for the augmented values over contiguous subsequences of
lists can be processed in $O(k \log n)$ expected work and $O(\log n)$ depth
with high probability by simply performing each query with
Algorithm~\ref{alg:aug-query-value} in parallel at $O(\log n)$ expected work
per query.

\subsection{Implementation}\label{subsec:skip-list-implement}
We provide details about our skip list implementations in
Appendix~\ref{sec:app-skip-list-implementation}.

\section{Batch-Parallel Euler Tour Trees}\label{sec:ett}
In this section we present batch-parallel Euler tour trees, a solution
to the batch-parallel dynamic trees problem. In order to ease
exposition, we first present a batch-parallel interface for
the dynamic trees problem.

\myparagraph{Batch-Parallel Dynamic Trees Interface}
A solution to the batch-parallel dynamic trees problem supports
representing a forest as it undergoes batches of links, cuts, and
connectivity queries. \emph{Links} link two trees in the forest.
\emph{Cuts} delete an edge from the forest and break one tree into two
trees. \emph{Connectivity} queries take two vertices in the forest and
return whether they are connected (that is, whether they are in the same tree). We now give a
formal description of the interface. The data structure maintains a
graph $G=(E, V)$, which is assumed to be a forest under the following
operations:
\begin{itemize}
  \item \textbf{$\textproc{BatchLink}(\set{\set{u_1, v_1}, \ldots, \set{u_k,
      v_k}})$} takes an array of edges and adds them to the graph
    $G$. The input edges must not create a cycle in $G$.

  \item \textbf{$\textproc{BatchCut}(\set{\set{u_1, v_1}, \ldots, \set{u_k,
      v_k}})$} takes an array of edges and removes them from the
    graph $G$.

  \item \textbf{$\textproc{BatchConnected}(\set{\set{u_1,v_1}, \ldots, \set{u_k,
      v_k}})$} takes an array of tuples representing queries. The
    output is an array where the $i$-th entry returns whether vertices
    $u_i$ and $v_i$ are connected by a path in $G$.
\end{itemize}

We also support augmenting the trees with an associative and
commutative function $f: D^2 \to D$ with values from $D$ assigned to
vertices and edges of the forest. The goal of augmentation is to compute $f$ over
subtrees of the represented forest. Note that we assume that the
function is commutative in order to allow implementations to not maintain any
specific order over each vertex's children. The interface
supports batch updates over vertices and edges. The primitives are
similar to the batch updates of values for augmented skip lists, so we
elide the details. The interface for subtree queries is different, and
we present it below:
\begin{itemize}
  \item \textbf{$\textproc{BatchSubtree}(\set{(u_1, p_1), \ldots, (u_k,
      p_k)})$} takes an array of tuples, where the $i$-th tuple
    contains a vertex $u_i$ and its parent $p_i$ in the tree. It
    returns an array where the $i$-th entry contains the value of $f$
    summed over $u_i$'s subtree relative to its parent $p_i$ in $G$.
    Note that because the represented trees are unrooted, we require
    providing the parent $p_i$ in order to determine the intended
    subtree for $u_i$.
\end{itemize}

(Some dynamic trees data structures allow queries for augmented values summed
over \emph{paths} rather than subtrees in the represented forest, but Euler tour
trees do not.)

\myparagraph{Euler Tour Trees} We focus on a variant of Euler tour trees presented by
Tarjan~\cite{tarjan1997dynamic}.  To represent a tree as an Euler tour
tree, replace each edge $\set{u, v}$ with two directed edges $(u, v)$
and $(v, u)$ and add a loop $(v, v)$ to each vertex $v$, as shown in
Figure~\ref{fig:et-transform}. This construction produces a
connected graph in which each vertex has equal indegree and outdegree,
and therefore the graph admits an Euler tour. We represent the tree as
any of its Euler tours.

\begin{figure}
    \centering
    \includegraphics[keepaspectratio=true, width=0.6\textwidth]{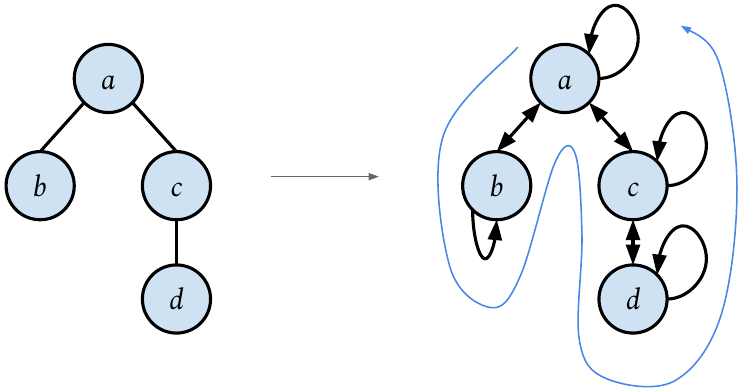}
    \caption[Transformation of a tree to obtain an Euler tour] {We take the
    tree on the left and transform it so that we get the following Euler tour of
    edges: $(a,b)$ $(b,b)$ $(b,a)$ $(a,c)$ $(c,d)$ $(d,d)$ $(d,c)$ $(c,c)$
    $(c,a)$ $(a,a)$.}
    \label{fig:et-transform}
\end{figure}

Now linking two trees corresponds to splicing their Euler tours together, and
cutting a tree corresponds to cutting out part of its Euler tour. Each of these
operations reduces to a few joins and splits on the tours. We may also
answer whether two vertices $u$ and $v$ are connected by asking whether their
loops, $(u, u)$ and $(v,v)$, reside in the same tour. Moreover, because a
subtree appears as a contiguous section of an Euler tour, we can efficiently
compute information about subtrees if we can efficiently compute information
about contiguous sections of tours.

Traditionally, Euler tour trees store an Euler tour by
breaking it into a sequence at an arbitrary location and then placing the
sequence in a balanced binary tree. We instead store Euler tours as cycles using
our skip lists from Section~\ref{sec:skip-lists}. Because skip lists are easy to
join and split in parallel, we can process batches of links and cuts on Euler
tour trees efficiently.

We show that for a batch of $k$ joins, $k$ splits, or $k$ connectivity queries
over an $n$-vertex forest, we can achieve $O(k \log (1 + n/k))$
expected work and $O(\log n)$ depth with high probability.  If we
build our Euler tour trees over augmented skip lists, we can also
answer subtree queries efficiently.

\subsection{Description}

Our Euler tour trees crucially rely on our parallel skip lists to represent
Euler tours. Since a graph of $n$ vertices has Euler tours whose lengths sum to
$O(n)$, the skip lists hold $O(n)$ elements. Thus a batch of $k$
joins or splits on the Euler tours takes $O(k \log (1 + n/k))$ expected work and
$O(\log n)$ depth with high probability.

\myparagraph{Construction}
For clarity, we describe our Euler tour trees using the
phase-concurrent unaugmented skip lists given in Section~\ref{sec:skip-lists}.
However, it is easy to organize the joins and splits into batches so as to
match the augmented skip list interface seen in
Subsection~\ref{subsec:augment-skip-lists}. We also treat our dictionary
data structure as phase-concurrent for clarity, but again, this is easy to
circumvent.

We add fields \textproc{twin} and \textproc{mark} to each
skip list element. For an element representing a directed edge $(u,v)$,
$\textproc{twin}$ is a pointer to the element representing the directed edge $(v,
u)$ in the opposite direction. We initialize the field \textproc{mark} to
$\codevar{false}$ and use it during splitting to mark elements that will be
removed.

\begin{algorithm}
  \caption{Euler tour tree data structure initialization.}
  \label{alg:et-init}
\small
\begin{algorithmic}[1]
  \Procedure{Initialize}{$n$}
    \State $\codevar{verts} = \set{}$ \Comment $n$-length array
    \For{$i \in \set{1, \dots, n}$} \textbf{in parallel}
      \State $\codevar{verts[i]} = \Call{CreateNode}{\null}$
      \State \Call{Join}{$\codevar{verts}[i], \codevar{verts}[i]$}
    \EndFor
    \State $\codevar{edges} = \Call{dict}{\null}$ \Comment empty dictionary
    \State $\codevar{successors} = \set{}$ \Comment $n$-length array
  \EndProcedure
\end{algorithmic}
\end{algorithm}

At initialization (Algorithm~\ref{alg:et-init}), the represented graph is an
$n$-vertex forest with no edges, and we assume the vertices are labeled with
integers $1, 2, \dots, n$. We create an $n$-length array $\codevar{verts}$ such
that $\codevar{verts}[i]$ stores a pointer to the skip list element representing
the loop edge $(i, i)$.  As such, in parallel, for $i = 1,\dots, n$, we create a
skip list element, assign it to $\codevar{verts}[i]$, and join it to itself to form
a singleton cycle.  These cycles are the Euler tours in an empty graph. We also
keep a dictionary $\codevar{edges}$ that maps edges $(u, v)$ with $u \not= v$ to
corresponding skip list elements. Lastly, we create an array
$\codevar{successors}$ that will be used as scratch space for batch linking.

\myparagraph{Connectivity queries}
To check whether two vertices are connected, we simply check whether they live
in the same Euler tour by comparing the representatives of their tours' skip
lists. The complexity of this can be made $O(k \log (1 + n/k))$ expected work
and $O(\log n)$ depth with high probability using an efficient
\textproc{BatchFindRep} algorithm.

\begin{algorithm}
  \caption{Add a batch of edges to Euler tour tree.}
  \label{alg:et-batch-link}
\small
\begin{algorithmic}[1]
  \Procedure{BatchLink}{$\set{\set{u_1, v_1}, \set{u_2, v_2}, \dots, \set{u_k, v_k}}$}
     \State \Comment Adding input edges must not create a cycle in graph
     \State \Comment Create nodes representing new edges
     \label{line:batch-link-new-nodes-start}
     \For{$i \in \set{1, \dots, k}$} \textbf{in parallel}
       \State $\codevar{uv} = \Call{CreateNode}{\null}$
       \State $\codevar{vu} = \Call{CreateNode}{\null}$
       \State $\codevar{uv} \rightarrow \textproc{twin} = \codevar{vu}$
       \State $\codevar{vu} \rightarrow \textproc{twin} = \codevar{uv}$
       \State $\codevar{edges}[(u_i, v_i)] = uv$
       \State $\codevar{edges}[(v_i, u_i)] = vu$
     \EndFor
     \label{line:batch-link-new-nodes-end}
     \State \Comment Cut at locations at which we splice in other tours
     \label{line:batch-link-split-start}
     \For{$i \in \set{1, \dots, k}$} \textbf{in parallel}
       \For{$w \in \set{u_i, v_i}$}
         \State $\codevar{w\_node} = \codevar{verts}[w]$
         \State $\codevar{w\_succ} = \codevar{w\_node} \rightarrow \textproc{right}$
         \If{$\codevar{w\_succ} \not= \codevar{null}$}
           \State \Comment \parbox[t]{.75\linewidth}{benign race; this assignment and split are idempotent}
           \State $\codevar{successors}[w] = \codevar{w\_succ}$
           \State \Call{Split}{$\codevar{w\_node}$}
         \EndIf
       \EndFor
     \EndFor
     \label{line:batch-link-split-end}
     \State $\codevar{sorted\_edges} =
       \newline\makebox[3em]{}
       \Call{Semisort}{
       \set{(u_1, v_1), (v_1, u_1), \dots, (u_k, v_k), (v_k, u_k)}}$
     \label{line:batch-link-semisort}
     \State \Comment Join together tours with new edge nodes in between
     \label{line:batch-link-join-start}
     \For{$i \in \set{1, \dots, 2k}$} \textbf{in parallel}
       \State $(\codevar{u}, \codevar{v}) = \codevar{sorted\_edges}[i]$
       \State $(\codevar{u\_prev}, \codevar{v\_prev}) = \codevar{sorted\_edges}[i - 1]$
       \State $(\codevar{u\_next}, \codevar{v\_next}) = \codevar{sorted\_edges}[i + 1]$
       \If{$i = 1$ \textbf{or} $u \not= u\_prev$}
         \State \Call{Join}{$(\codevar{verts}[\codevar{u}], \codevar{edges}[(\codevar{u}, \codevar{v})])$}
       \EndIf
       \If{$i = 2k$ \textbf{or} $\codevar{u} \not= \codevar{u\_next}$}
         \State \Call{Join}{$(\codevar{edges}[(\codevar{v}, \codevar{u})], \codevar{successors}[\codevar{u}])$}
       \Else
         \State \Call{Join}{$(\codevar{edges}[(\codevar{v}, \codevar{u})],
           \codevar{edges}[(\codevar{u\_next}, \codevar{v\_next})])$}
       \EndIf
     \EndFor
     \label{line:batch-link-join-end}
  \EndProcedure
\end{algorithmic}
\end{algorithm}

\myparagraph{Batch Link}
Algorithm~\ref{alg:et-batch-link} shows our algorithm for adding a batch of
edges. The algorithm takes an array of edges $A$ to add as input.
We assume that adding the input edges preserves acyclity.

To add a single edge $\set{u, v}$ sequentially, we can find
locations where $u$ and $v$ appear in their tours by looking up
$\codevar{verts}[u]$ and $\codevar{verts}[v]$. We split on those locations and
join the resulting cut up tours back together with new nodes representing
$(u,v)$ and $(v, u)$ in between. If we want to add several edges in parallel, we
need to be careful when inserting edges that are incident to the same vertex and
thus attempt to join on the same location.

With that in mind, we proceed to describe our algorithm. In lines
\ref{line:batch-link-new-nodes-start}-\ref{line:batch-link-new-nodes-end}, for
each input edge $\set{u, v}$, we allocate new list elements representing directed
edges $(u, v)$ and $(v, u)$. Then, in lines
\ref{line:batch-link-split-start}-\ref{line:batch-link-split-end}, for each
vertex $u$ that appears in the input, we split $u$'s list at
$\codevar{verts}[u]$ as a location to splice in other tours. We also
save the successor of $\codevar{verts}[u]$ in $\codevar{successors}[u]$ so that
we can join everything back together at the end.

For each vertex $u$, say that the input tells us that we want to newly connect
$u$ to vertices $w_1, w_2, \dots, w_k$.
Then we join together the nodes representing $(u, u)$ to $(u, w_1)$, $(w_i, u)$
to $(u, w_{i+1})$ for $1 \leq i < k$, and $(w_k, u)$ to what was the successor to $(u,
u)$ before splitting. In our code, we arrange this in lines
\ref{line:batch-link-semisort}-\ref{line:batch-link-join-end} by semisorting
the input to collect together all edges incident on a vertex. The ordering of
$w_1, w_2, \dots, w_k$ is unimportant, only corresponding to the order in
which they appear after $u$ in the Euler tour.

\begin{figure}
     \centering
     \includegraphics[keepaspectratio=true, width=0.25\textwidth]{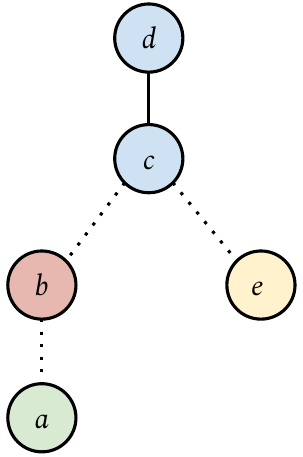}
     \caption[Example graph for Euler tour batch link] {An example graph for
     illustrating batch linking . The dashed edges $\set{a, b}, \set{b, c},
     \set{c,e}$ are new edges to add in a batch, whereas the solid edge is
     an already existing edge.}
     \label{fig:et-link-example-graph}
\end{figure}
\begin{figure}
    \centering
    \includegraphics[keepaspectratio=true, width=0.5\textwidth]{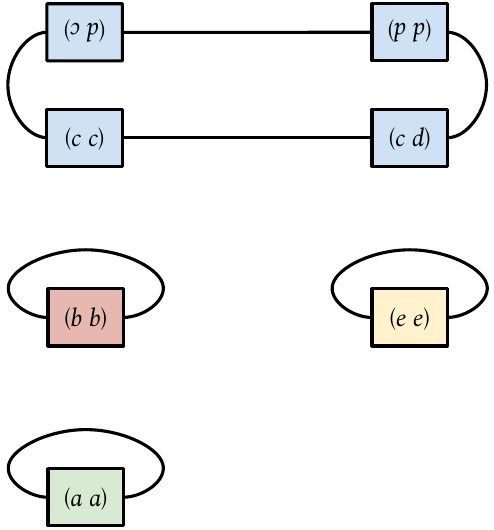}
    \caption[Euler tours before batch link] {We represent the
    graph from Figure~\ref{fig:et-link-example-graph} prior to adding the
    dashed edges with Euler tours stored in cyclic linked lists.}
    \label{fig:et-link-list-before}
\end{figure}
\begin{figure}
    \centering
    \includegraphics[keepaspectratio=true, width=0.75\textwidth]{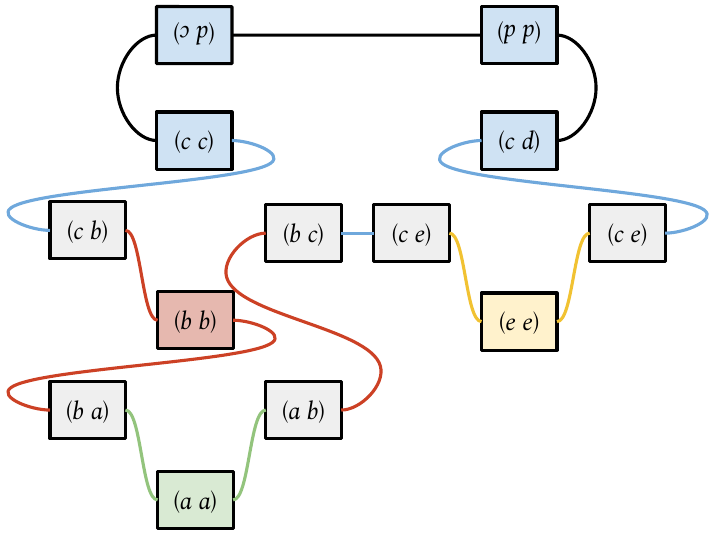}
    \caption[Euler tours after batch link] {After adding the dashed edges in
    figure~\ref{fig:et-link-example-graph}, the Euler tour stored in a linked
    list may look like this. In this figure, we color links in the list the same
    colors as the nodes ``responsible'' for adding those links in through calls
    to $\textproc{Join}$.}
    \label{fig:et-link-list-after}
\end{figure}

As an example of the desired result from a batch link, consider the graph in
Figure~\ref{fig:et-link-example-graph} on which we wish to perform a batch link
with input $\set{\set{a, b}, \set{b, c}, \set{c,e}}$. Prior to the batch link,
the Euler tour may look like Figure~\ref{fig:et-link-list-before}, and after the
batch link, the Euler tour may look like Figure~\ref{fig:et-link-list-after}.
Let us focus on what happens to the list containing vertex $c$. After allocating
the new nodes representing the new edges to add, we split node $(c, c)$ from its
successor $(c, d)$. We want to add edges connecting $c$ to $b$ and to $e$. As
such, we perform joins adding links from nodes $(c, c)$ to $(c, b)$, $(b, c)$ to
$(c, e)$, and $(e, c)$ to $(c, d)$.  These new links correspond to the blue
links in figure~\ref{fig:et-link-list-after}. Doing this over all vertices
provides all the joins needed to form an Euler tour over the whole graph.

Using our skip lists and an efficient semisort~\cite{gu2015top}, we see that the work is
$O(k\log(1 + n/k))$ in expectation, and the depth is $O(\log n)$ with high
probability.

\begin{algorithm}
  \caption{Computes locations at which to join for batch cut.}
  \label{alg:list-rank}
  \small
\begin{algorithmic}[1]
  \Procedure{GetNextUnmarked}{$\codevar{elements} = \set{z_1, z_2, \dots, z_k}$}
     \State \Comment Input is a set of skip list elements
  \For{$i \in \set{1, \dots, k}$} \textbf{in parallel}
    \State $\codevar{next} = z_i \rightarrow \textproc{twin} \rightarrow \textproc{right}$
    \If{$\codevar{next} \rightarrow \textproc{mark}$}
      \State $z_i \rightarrow \textproc{next\_edge} = \codevar{next}$
    \Else
      \State $z_i \rightarrow \textproc{next\_edge} = \codevar{null}$
    \EndIf
  \EndFor
  \State \Comment{\parbox[t]{.75\linewidth}{
  Use list tail-finding on the linked lists induced by \textproc{next\_edge} pointers.
  Get an array $\codevar{last\_marked}$ such that $\codevar{last\_marked}[i]$
  points to the last node in $z_i$'s linked list.}}
  \State $\codevar{last\_marked} = \Call{ListTailFind}{\codevar{elements}}$
  \State $\codevar{result} = \set{}$  \Comment{$k$-length array}
  \For{$i \in \set{1, \dots, k}$} \textbf{in parallel}
    \State $\codevar{result}[i] = \codevar{last\_marked}[i] \rightarrow \textproc{twin} \rightarrow \textproc{right}$
  \EndFor
  \State \Return $\codevar{result}$
  \EndProcedure
\end{algorithmic}
\end{algorithm}

\begin{algorithm}
  \caption{Remove a batch of edges from Euler tour tree.}
  \label{alg:et-batch-cut}
\small
\begin{algorithmic}[1]
  \Procedure{BatchCut}{$\set{\set{u_1, v_1}, \set{u_2, v_2}, \dots, \set{u_k, v_k}}$}
     \State \Comment Input edges must be in graph and must have no duplicates.
  \label{line:batch-cut-get-edges-start}
  \State $\codevar{directed\_edges} = \set{}$ \Comment $2k$-length array
  \For{$i \in \set{1, \ldots, k}$} \textbf{in parallel}
    \State $\codevar{directed\_edges}[2i - 1] = \codevar{edges}[(u_i, v_i)]$
    \State $\codevar{directed\_edges}[2i] = \codevar{edges}[(v_i, u_i)]$
  \EndFor
  \label{line:batch-cut-get-edges-end}
  \For{$i \in \set{1, \ldots, k}$} \textbf{in parallel}
    \State $\codevar{edges} \rightarrow \Call{RemoveFromDict}{(u_i, v_i)}$
    \State $\codevar{edges} \rightarrow \Call{RemoveFromDict}{(v_i, u_i)}$
  \EndFor
  \State $\codevar{join\_lefts} = \set{}$ \Comment $2k$-length array
  \For{$i \in \set{1, \ldots, 2k}$} \textbf{in parallel}
    \State $\codevar{join\_lefts}[i] = \codevar{directed\_edges}[i] \rightarrow \textproc{left}$
    \State $\codevar{directed\_edges}[i] \rightarrow \textproc{mark} = \codevar{true}$
    \label{line:batch-cut-mark-all}
  \EndFor
  \State $\codevar{join\_rights} = \Call{GetNextUnmarked}{\codevar{directed\_edges}}$
  \label{line:batch-cut-get-next-unmarked}
  \State \Comment Cut edges out of tour
  \label{line:batch-cut-split-start}
  \For{$i \in \set{1, \ldots, 2k}$} \textbf{in parallel}
    \State \Call{Split}{$\codevar{directed\_edges}[i]$}
    \State $\codevar{pred} = \codevar{directed\_edges}[i] \rightarrow \textproc{Left}$
    \If{$\codevar{pred} \not= \codevar{null}$}
      \State $\Call{Split}{\codevar{pred}}$
    \EndIf
  \EndFor
  \label{line:batch-cut-split-end}
  \State \Comment{Join tours back together}
  \label{line:batch-cut-join-start}
  \For{$i \in \set{1, \ldots, 2k}$} \textbf{in parallel}
    \If{\textbf{not} $\codevar{join\_lefts}[i]\rightarrow \textproc{mark}$}
      \State \Call{Join}{$\codevar{join\_lefts}[i], \codevar{join\_rights}[i]$}
    \EndIf
  \EndFor
  \label{line:batch-cut-join-end}
  \For{$i \in \set{1, \ldots, 2k}$} \textbf{in parallel}
    \State \Call{DeleteNode}{$\codevar{directed\_edges}[i]$}
  \EndFor
  \EndProcedure
\end{algorithmic}
\end{algorithm}

\myparagraph{Batch Cut}
Algorithm~\ref{alg:et-batch-cut} describes how to remove a batch of edges. Our algorithm assumes that each edge exists in
the forest and that there are no duplicates.

Cutting a single edge is simple. If we cut an edge $\set{u, v}$, we split before
and after $(u,v)$ and $(v, u)$ in the tour and join their neighbors together
appropriately. However, as with batch linking, the task gets more
difficult if we want to cut many edges out of a single node, because those
neighbors that we want to join together may themselves be split off.

\begin{figure}
    \centering
    \includegraphics[keepaspectratio=true, width=0.4\textwidth]{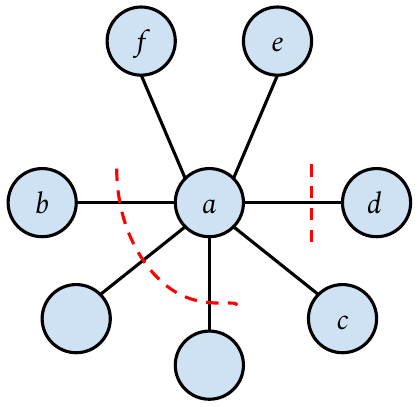}
    \caption[Tree undergoing batch cut] {Batch cutting four
    edges. If we take an Euler tour counter-clockwise around this graph, this
    batch cuts may require us to join $(f,a)$ to $(a, c)$ in the
    tour.}
    \label{fig:et-cut-example-graph}
\end{figure}

As an example, consider the graph in Figure~\ref{fig:et-cut-example-graph} in which
we remove four edges. If our tour on the graph goes counter-clockwise around the
diagram, then we may need to join $(c, a)$ to $(a, e)$ in the tour as a result
of cutting $\set{a,d}$ and join $(f,a)$ all the way around to $(a, c)$ as a
result of the three contiguous cuts. How do we identify that we need to join
$(f, a)$ to $(a, c)$? We could mark edges that are going to be cut, then start
from $(f,a)$ and walk along ``adjacent'' edges incident to $a$ using the
\textproc{twin} pointers until we reach an edge
that will not be cut. Then we would know to join $(f,a)$ to that edge. However,
the search for an unmarked edge will have poor depth if lots of
edges will be cut.

To achieve low depth in this step, we use list tail-finding. Consider
the linked lists induced by having each edge point at its adjacent
edge if it is marked. Note that each linked list must terminate
because traversing adjacent edges will eventually reach a loop edge of
the form $(v,v)$, which will certainly be unmarked. Then running list
tail-finding on these linked lists finds for every edge the next
unmarked edge as desired.

In Algorithm~\ref{alg:et-batch-cut}, we first fetch all the skip list nodes
corresponding to the edges in lines
\ref{line:batch-cut-get-edges-start}-\ref{line:batch-cut-get-edges-end}. Then we
invoke Algorithm~\ref{alg:list-rank} on
line~\ref{line:batch-cut-get-next-unmarked}, which performs the list
tail-finding described above. We cut out all the input edges on
lines~\ref{line:batch-cut-split-start}-\ref{line:batch-cut-split-end} and
rejoin all the tours together on
lines~\ref{line:batch-cut-join-start}-\ref{line:batch-cut-join-end}.
In total these steps take $O(k \log (1 + n/k))$ expected work and
$O(\log n)$ depth with high probability.

\myparagraph{Augmentation}
We build our augmented Euler tour trees over the concurrent augmented
skip lists from Subsection~\ref{subsec:augment-skip-lists} and achieve
the same efficiency bounds. Recall that we have an associative and
commutative function $f: D^2 \to D$ and assign values from $D$ to
vertices and edges of the forest. The goal is to compute $f$ over
subtrees of the represented forest.

Say we want to compute $f$ over a vertex $v$'s subtree relative to
$v$'s parent in the tree, $p$. Then if we look up the skip list
elements corresponding to $(p,v)$ and $(v,p)$ in $\codevar{edges}$,
the value of $f$ over $v$'s subtree is the result of applying $f$ on
the subsequence between $(p,v)$ and $(v,p)$. This may be done by
calling \textproc{BatchQueryValue} with the elements corresponding to
$(p, v)$ and $(v, p)$ in the underlying augmented skip list. The complexity for
$k$ such queries is $O(k \log n)$ expected work and $O(\log n)$ depth with high
probability.

\subsection{Implementation}
We provide details about our implementation of Euler tour trees in
Appendix~\ref{sec:app-ett-implementation}.

\section{Experiments}\label{sec:experiments}

We run our experiments on a 72-core Dell PowerEdge R930 (with two-way
hyper-threading) with 4 $\times$ 2.4GHz Intel 18-core E7-8867 v4 Xeon processors
(with a 4800MHz bus and 45MB L3 cache) and 1TB of main memory. Our programs use
Cilk Plus to express parallelism and are compiled with the g++ compiler (version
5.5.0).  When running in parallel, we
use the command \code{numactl -i all} to evenly distribute the allocated memory
among the processors. On our figures, a thread count of 72(h) denotes
using all $72$ cores with hyper-threading, i.e. using 144 threads.

\subsection{Unaugmented Skip Lists}\label{subsec:skip-list-experiment}

We evaluate the performance of our skip lists (with the probability of a node
having a direct parent set to $p = 1/2$) by comparing them against other
sequence data structures. In particular, we compare against sequential skip
lists, which are the same as our skip lists except that they do not use CAS to
set pointers. In addition, for an element of height $h$, they allocate an array
of exactly length $h$ for holding pointers rather than an array of length $O(h)$
as our parallel skip list implementation does (see Appendix~\ref{sec:app-skip-list-implementation}). We also
implemented splay trees~\cite{sleator1985self} and
treaps~\cite{aragon1989randomized}.

So that we can compare against another parallel data structure, we implement
parallel batch join and batch split operations on treaps. To batch join, we
first ignore a constant fraction of the joins. If we imagine each join from treap
$T$ to treap $S$ as a pointer from $T$ to $S$, we get lists on
the treaps. No list can be very long because of the ignored joins. We get
parallelism by processing each list independently. If we store extra information
on the treap nodes, we can walk along a list and perform its joins sequentially.
Then we recursively process the previously ignored joins.  For batch split, we
semisort the splits keyed on the root of the treap to be split.  This lets us
find all splits that act on a particular treap. We process each treap
independently. When performing multiple splits on a treap, we get parallelism by
divide and conquer---we perform a random split and recursively split
the resulting two treaps in parallel. The randomized efficiency bounds are $O(k
\log n)$ work for batch join, $O(k \log n \log k)$ work for batch split, and
$O(\log n \log k)$ depth for both. In the future, we would like to further
compare our skip lists against other parallel data structures, such as the
$(a,b)$-trees of Akhremtsev and Sanders~\cite{akhremtsev2016fast}.

For an experiment, we take $n = 10^8$ elements and fix a batch size
$k$. We set up a trial by joining all the elements in a chain, and then we time
how long it takes to split and rejoin the sequence at $k$ pseudorandomly sampled
locations. We report the median time over three trials. As an artifact of this
setup, the splay tree has an advantage on joining small batches after splitting due to how
splay trees exploit locality.

\begin{table} \centering
  \tiny
\begin{tabular}{l|l|l|l|l|l|l|l|l|l|l|l}
  \toprule
    \multirow{2}{*}{Data structure} & \multirow{2}{*}{\makecell[cl]{Batch\\size $k$}} & \multirow{2}{*}{Operation} & \multicolumn{8}{c}{Number of threads} \\
                                        &                         &       & 1     & 2     & 4     & 8      & 16     & 32     & 64      & 72      & 72(h) \\
  \hline
  \multirow{4}{*}{Concurrent skip list} & \multirow{2}{*}{$10^4$} & join  & .0301 & .0165 & .0101 & .00632 & .00298 & .00166 & .000937 & .000839 & .000530 \\
                                        &                         & split & .0331 & .0173 & .0113 & .00579 & .00298 & .00154 & .000865 & .000775 & .000542 \\
  \cline{2-12}                          & \multirow{2}{*}{$10^7$} & join  & 12.6  & 6.43  & 4.07  & 2.05   & 1.03   & .528   & .279    & .267    & .156 \\
                                        &                         & split & 10.4  & 5.34  & 3.34  & 1.86   & .869   & .426   & .228    & .214    & .122 \\
  \bottomrule
\end{tabular}
  \caption[Phase-concurrent skip list running times against number of
  threads]{Running time (in seconds) of our concurrent skip lists with $n = 10^8$.}
  \label{tab:pc-sl-threads}
\end{table}
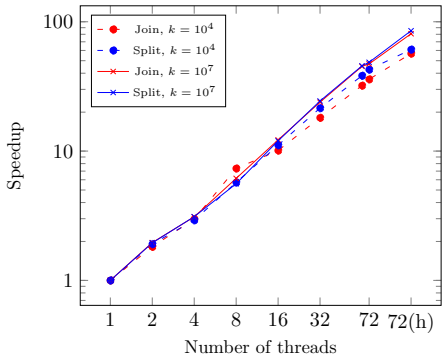
\begin{figure} \centering
    \begin{tikzpicture}[scale=0.7]
    \begin{axis}[
    legend style={font=\tiny},
    xmode=log,
    ymode=log,
    xlabel={\small Number of threads},
    ylabel={\small Speedup},
    xtick={1,2,4,8,16,32,72},
    extra x ticks = {144},
    extra x tick labels = {72(h)},
    log ticks with fixed point,
    legend pos=north west,
    ]
      \addplot[mark=*,color=red, loosely dashed] coordinates {
        (1,   1)
        (2,   1.82)
        (4,   2.98)
        (8,   7.34)
        (16,  10.1)
        (32,  18.1)
        (64,  32.1)
        (72,  35.9)
        (144, 56.8)
      };
      \addlegendentry{Join, $k=10^4$}

      \addplot[mark=*,color=blue, loosely dashed] coordinates {
        (1,   1)
        (2,   1.91)
        (4,   2.92)
        (8,   5.72)
        (16,  11.1)
        (32,  21.5)
        (64,  38.3)
        (72,  42.7)
        (144, 61.0)
      };
      \addlegendentry{Split, $k=10^4$}

      \addplot[mark=x,color=red] coordinates {
        (1,   1)
        (2,   1.96)
        (4,   3.10)
        (8,   6.15)
        (16,  12.2)
        (32,  23.9)
        (64,  45.2)
        (72,  47.2)
        (144, 80.8)
      };
      \addlegendentry{Join, $k=10^7$}

      \addplot[mark=x,color=blue] coordinates {
        (1,   1)
        (2,   1.95)
        (4,   3.11)
        (8,   5.59)
        (16,  12.0)
        (32,  24.4)
        (64,  45.6)
        (72,  48.6)
        (144, 85.2)
      };
      \addlegendentry{Split, $k=10^7$}
    \end{axis}
  \end{tikzpicture}
  \caption[Phase-concurrent skip list speedup]{Speedup of our concurrent skip
  lists with $n = 10^8$.}
  \label{fig:pc-sl-threads}
\end{figure}%

\begin{table} \centering
  \scriptsize
\begin{tabular}{l|l|l|l|l|l|l|l|l}
  \toprule
    \multirow{2}{*}{Data structure} & \multirow{2}{*}{Operation} & \multicolumn{7}{c}{Batch size $k$} \\
                                                 &       & $10^2$   & $10^3$  & $10^4$  & $10^5$ & $10^6$ & $10^7$ & $10^8 - 1$ \\
  \hline
  \multirow{2}{*}{Concurrent skip list (72(h))}  & join  & .0000629 & .000122 & .000595 & .00461 & .0347  & .161   & .684  \\
                                                 & split & .0000629 & .000132 & .000660 & .00363 & .0254  & .131   & .559  \\
  \hline
  \multirow{2}{*}{Concurrent skip list (1)}      & join  & .000300  & .00260  & .0295   & .376   & 2.44   & 12.8   & 54.7  \\
                                                 & split & .000383  & .00355  & .0325   & .281   & 1.90   & 10.6   & 47.6  \\
  \hline
  \multirow{2}{*}{Sequential skip list}          & join  & .000324  & .00265  & .0275   & .362   & 2.26   & 11.7   & 44.9  \\
                                                 & split & .000396  & .00359  & .0319   & .272   & 1.80   & 9.87   & 45.0  \\
  \hline
  \multirow{2}{*}{Parallel treap (72(h))}        & join  & .000227  & .000758 & .00141  & .00475 & .0250  & .112   & .447  \\
                                                 & split & .000335  & .00186  & .00521  & .0277  & .265   & 2.54   & 22.8  \\
  \hline
  \multirow{2}{*}{Sequential treap}              & join  & .0000989 & .00104  & .00712  & .183   & 1.23   & 6.86   & 25.2  \\
                                                 & split & .000231  & .00213  & .0189   & .168   & 1.30   & 7.69   & 33.5  \\
  \hline
  \multirow{2}{*}{Splay tree}                    & join  & .0000689 & .000688 & .00575  & .106   & 1.09   & 7.79   & 32.1  \\
                                                 & split & .000329  & .00284  & .0255   & .215   & 1.63   & 9.21   & 36.3  \\
  \bottomrule
\end{tabular}
  \caption[Sequence data structure running times with varying batch size]{Running
  time (in seconds) of sequence data structures with $n=10^8$ and varying batch size.}
  \label{tab:pc-sl-batch-size}
\end{table}

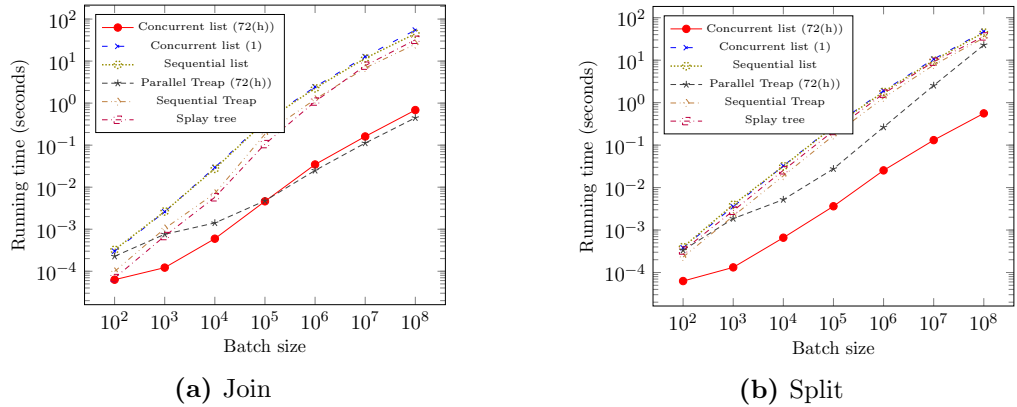
\begin{figure} \centering
  \begin{subfigure}{0.45 \textwidth} \centering
    \begin{tikzpicture}[scale=0.7]
    \begin{axis}[
    legend style={font=\tiny},
    xmode=log,
    ymode=log,
    xlabel={\small Batch size},
    ylabel={\small Running time (seconds)},
    xtick={1e2,1e3,1e4,1e5,1e6,1e7,1e8},
    xminorticks=false,
    legend pos=north west,
    ]
      \addplot[mark=*,color=red] coordinates {
        (1e2,   .0000629)
        (1e3,   .000122)
        (1e4,   .000595)
        (1e5,   .00461)
        (1e6,   .0347)
        (1e7,   .161)
        (1e8-1, .684)
      };
      \addlegendentry{Concurrent list (72(h))}

      \addplot[mark=x,color=blue, loosely dashed] coordinates {
        (1e2,   .0003)
        (1e3,   .0026)
        (1e4,   .0295)
        (1e5,   .376)
        (1e6,   2.44)
        (1e7,   12.8)
        (1e8-1, 54.7)
      };
      \addlegendentry{Concurrent list (1)}

      \addplot[mark=o,color=olive, thick, densely dotted] coordinates {
        (1e2,   .000324)
        (1e3,   .00265)
        (1e4,   .0275)
        (1e5,   .362)
        (1e6,   2.26)
        (1e7,   11.7)
        (1e8-1, 44.9)
      };
      \addlegendentry{Sequential list}

      \addplot[mark=star,color=darkgray, densely dashed] coordinates {
        (1e2,   .000227)
        (1e3,   .000758)
        (1e4,   .00141)
        (1e5,   .00475)
        (1e6,   0.025)
        (1e7,   0.112)
        (1e8-1, 0.447)
      };
      \addlegendentry{Parallel Treap (72(h))}

      \addplot[mark=diamond,color=brown, dashdotdotted] coordinates {
        (1e2,   .0000989)
        (1e3,   .00104)
        (1e4,   .00712)
        (1e5,   .183)
        (1e6,   1.23)
        (1e7,   6.86)
        (1e8-1, 25.2)
      };
      \addlegendentry{Sequential Treap}

      \addplot[mark=square,color=purple, dashdotted] coordinates {
        (1e2,   .0000689)
        (1e3,   .000688)
        (1e4,   .00575)
        (1e5,   .106)
        (1e6,   1.09)
        (1e7,   7.79)
        (1e8-1, 32.1)
      };
      \addlegendentry{Splay tree}
    \end{axis}
  \end{tikzpicture}
  \caption{Join}
  \end{subfigure}
  \begin{subfigure}{0.45 \textwidth} \centering
    \begin{tikzpicture}[scale=0.7]
    \begin{axis}[
    legend style={font=\tiny},
    xmode=log,
    ymode=log,
    xlabel={\small Batch size},
    ylabel={\small Running time (seconds)},
    xtick={1e2,1e3,1e4,1e5,1e6,1e7,1e8},
    xminorticks=false,
    legend pos=north west,
    ]
      \addplot[mark=*,color=red] coordinates {
        (1e2,   .0000629)
        (1e3,   .000132)
        (1e4,   .00066)
        (1e5,   .00363)
        (1e6,   .0254)
        (1e7,   .131)
        (1e8-1, .559)
      };
      \addlegendentry{Concurrent list (72(h))}

      \addplot[mark=x,color=blue, loosely dashed] coordinates {
        (1e2,   .000383)
        (1e3,   .00355)
        (1e4,   .0325)
        (1e5,   .281)
        (1e6,   1.9)
        (1e7,   10.6)
        (1e8-1, 47.6)
      };
      \addlegendentry{Concurrent list (1)}

      \addplot[mark=o,color=olive, thick, densely dotted] coordinates {
        (1e2,   .000396)
        (1e3,   .00395)
        (1e4,   .0319)
        (1e5,   .272)
        (1e6,   1.8)
        (1e7,   9.87)
        (1e8-1, 45)
      };
      \addlegendentry{Sequential list}

      \addplot[mark=star,color=darkgray, densely dashed] coordinates {
        (1e2,   .000335)
        (1e3,   .00186)
        (1e4,   .00521)
        (1e5,   .0277)
        (1e6,   0.265)
        (1e7,   2.54)
        (1e8-1, 22.8)
      };
      \addlegendentry{Parallel Treap (72(h))}

      \addplot[mark=diamond,color=brown, dashdotdotted] coordinates {
        (1e2,   .000231)
        (1e3,   .00213)
        (1e4,   .0189)
        (1e5,   .168)
        (1e6,   1.3)
        (1e7,   7.69)
        (1e8-1, 33.5)
      };
      \addlegendentry{Sequential Treap}

      \addplot[mark=square,color=purple, dashdotted] coordinates {
        (1e2,   .000329)
        (1e3,   .00284)
        (1e4,   .0255)
        (1e5,   .215)
        (1e6,   1.63)
        (1e7,   9.21)
        (1e8-1, 36.3)
      };
      \addlegendentry{Splay tree}
    \end{axis}
  \end{tikzpicture}
  \caption{Split}
  \end{subfigure}
  \caption[Sequence data structure running times with varying batch
  size]{Running time of sequence data structures operations with varying batch size.}
  \label{fig:pc-sl-batch-size}
\end{figure}%

Table~\ref{tab:pc-sl-threads} and Figure~\ref{fig:pc-sl-threads}
illustrate that our skip list implementation
running on $72$ cores with hyper-threading demonstrates over $80\times$ speedup
relative to the implementation running on a single thread for $k=10^7$ and over
$55\times$ speedup for $k=10^4$. We compare our skip list to our other sequence
data structures in Table~\ref{tab:pc-sl-batch-size} and
Figure~\ref{fig:pc-sl-batch-size} . Our implementation of parallel batch join on
treaps is $1.4\times$ faster than our batch join on skip lists on the largest
batch sizes, but, as seen in Figure~\ref{fig:pc-sl-batch-size}, the parallel
batch split on treaps is much slower due to lots of overhead work. Moreover,
through parallelism, our data structure is significantly faster than all the
sequential algorithms at all batch sizes. When used sequentially, our data
structure behaves similarly to a traditional sequential skip list, suggesting
that using CAS does not significantly degrade the performance of a skip list.

\subsection{Augmented Skip Lists}

We compare the performance of our batch-parallel augmented skip lists against
a sequential augmented skip list. Besides not using CAS, the sequential skip
list updates augmented values after every join and split. This achieves only an
$O(k \log n)$ work bound for $k$ operations. Our experiment is the same as in
Subsection~\ref{subsec:skip-list-experiment}.

\begin{table} \centering
  \scriptsize
\begin{tabular}{l|l|l|l|l|l|l|l|l|l|l|l}
  \toprule
    \multirow{2}{*}{Data structure} & \multirow{2}{*}{\makecell[cl]{Batch\\size $k$}} & \multirow{2}{*}{Operation} &\multicolumn{8}{c}{Number of threads} \\
                                       &                         &       & 1     & 2     & 4     & 8     & 16     & 32     & 64     & 72     & 72(h) \\
  \hline
  \multirow{4}{*}{Parallel skip list}  & \multirow{2}{*}{$10^4$} & join  & .0908 & .0457 & .0251 & .0164 & .00871 & .00487 & .00292 & .00282 & .00227 \\
                                       &                         & split & .0546 & .0283 & .0195 & .0134 & .00561 & .00306 & .00177 & .00164 & .00113 \\
  \cline{2-12}                         & \multirow{2}{*}{$10^7$} & join  & 24.9  & 12.7  & 9.39  & 4.41  & 2.10   & 1.09   & .614   & .61    & .374 \\
                                       &                         & split & 19.7  & 10.1  & 7.49  & 3.48  & 1.61   & .810   & .422   & .398   & .253 \\
  \bottomrule
\end{tabular}
  \caption[Batch-parallel augmented skip list running times against number of
  threads]{Running time (in seconds) of our parallel augmented skip lists with $n = 10^8$.}
  \label{tab:aug-sl-threads}
\end{table}

\begin{figure} \centering
    \begin{tikzpicture}[scale=0.7]
    \begin{axis}[
    legend style={font=\tiny},
    xmode=log,
    ymode=log,
    xlabel={\small Number of threads},
    ylabel={\small Speedup},
    xtick={1,2,4,8,16,32,72},
    extra x ticks = {144},
    extra x tick labels = {72(h)},
    log ticks with fixed point,
    legend pos=north west,
    ]
      \addplot[mark=*,color=red, loosely dashed] coordinates {
        (1,   1)
        (2,   1.99)
        (4,   3.62)
        (8,   5.53)
        (16,  10.4)
        (32,  18.6)
        (64,  31.1)
        (72,  32.2)
        (144, 40.0)
      };
      \addlegendentry{Join, $k=10^4$}

      \addplot[mark=*,color=blue, loosely dashed] coordinates {
        (1,   1)
        (2,   1.93)
        (4,   2.80)
        (8,   4.07)
        (16,  9.73)
        (32,  17.84)
        (64,  30.8)
        (72,  33.3)
        (144, 48.3)
      };
      \addlegendentry{Split, $k=10^4$}

      \addplot[mark=x,color=red] coordinates {
        (1,   1)
        (2,   1.96)
        (4,   2.65)
        (8,   5.65)
        (16,  11.9)
        (32,  22.8)
        (64,  40.6)
        (72,  40.8)
        (144, 66.6)
      };
      \addlegendentry{Join, $k=10^7$}

      \addplot[mark=x,color=blue] coordinates {
        (1,   1)
        (2,   1.95)
        (4,   2.63)
        (8,   5.66)
        (16,  12.2)
        (32,  24.3)
        (64,  46.7)
        (72,  49.5)
        (144, 77.9)
      };
      \addlegendentry{Split, $k=10^7$}
    \end{axis}
  \end{tikzpicture}
  \caption[Batch-parallel augmented skip list and speedup]{Speedup of our
  parallel augmented skip lists with $n = 10^8$.}
  \label{fig:aug-sl-threads}
\end{figure}
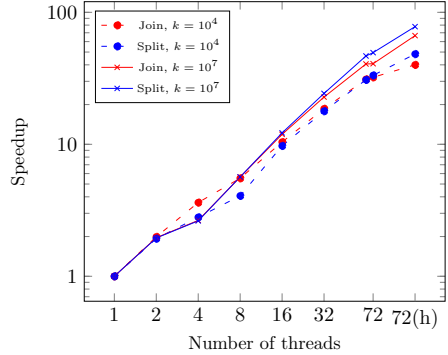

\begin{table} \centering
  \scriptsize
\begin{tabular}{l|l|l|l|l|l|l|l|l}
  \toprule
    \multirow{2}{*}{Data structure} & \multirow{2}{*}{Operation} & \multicolumn{7}{c}{Batch size $k$} \\
                                                &       & $10^2$  & $10^3$  & $10^4$ & $10^5$ & $10^6$ & $10^7$ & $10^8 - 1$ \\
  \hline
  \multirow{2}{*}{Parallel skip list (72(h))}   & join  & .000301 & .000559 & .00205 & .0131  & .0825  & .357   & 1.35  \\
                                                & split & .000152 & .000312 & .00117 & .00727 & .0450  & .229   & 1.03 \\
  \hline
  \multirow{2}{*}{Parallel skip list (1)}       & join  & .000823 & .00697  & .0893  & 1.04   & 5.73   & 25.5   & 102   \\
                                                & split & .00079  & .00625  & .0514  & .557   & 3.74   & 20.2   & 83.7  \\
  \hline
  \multirow{2}{*}{Sequential skip list}         & join  & .000716 & .00575  & .0764  & .724   & 4.79   & 27.5   & 131   \\
                                                & split & .000712 & .00647  & .0583  & .489   & 3.35   & 19.6   & 93.3  \\
  \bottomrule
\end{tabular}
  \caption[Augmented skip list running times with varying batch size]{Running time (in seconds) of
  augmented skip lists with $n = 10^8$ on random batches of varying size.}
  \label{tab:aug-sl-batch-size-random}
\end{table}

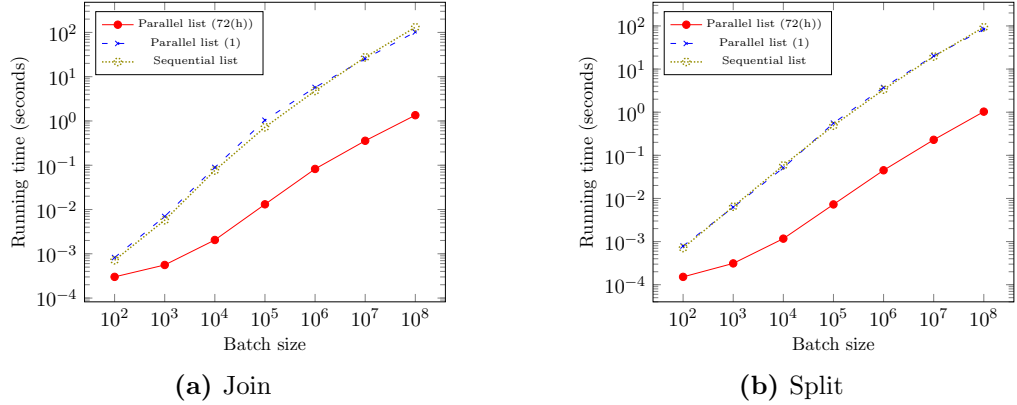
\begin{figure} \centering
  \begin{subfigure}{0.45 \textwidth} \centering
    \begin{tikzpicture}[scale=0.7]
    \begin{axis}[
    legend style={font=\tiny},
    xmode=log,
    ymode=log,
    xlabel={\small Batch size},
    ylabel={\small Running time (seconds)},
    xtick={1e2,1e3,1e4,1e5,1e6,1e7,1e8},
    xminorticks=false,
    legend pos=north west,
    ]
      \addplot[mark=*,color=red] coordinates {
        (1e2,   .000301)
        (1e3,   .000559)
        (1e4,   .00205)
        (1e5,   .0131)
        (1e6,   .0825)
        (1e7,   .357)
        (1e8-1, 1.35)
      };
      \addlegendentry{Parallel list (72(h))}

      \addplot[mark=x,color=blue, loosely dashed] coordinates {
        (1e2,   .000823)
        (1e3,   .00697)
        (1e4,   .0893)
        (1e5,   1.04)
        (1e6,   5.73)
        (1e7,   25.5)
        (1e8-1, 102)
      };
      \addlegendentry{Parallel list (1)}

      \addplot[mark=o,color=olive, thick, densely dotted] coordinates {
        (1e2,   .000716)
        (1e3,   .00575)
        (1e4,   .0764)
        (1e5,   .724)
        (1e6,   4.79)
        (1e7,   27.5)
        (1e8-1, 131)
      };
      \addlegendentry{Sequential list}
    \end{axis}
  \end{tikzpicture}
  \caption{Join}
  \end{subfigure}
  \begin{subfigure}{0.45 \textwidth} \centering
    \begin{tikzpicture}[scale=0.7]
    \begin{axis}[
    legend style={font=\tiny},
    xmode=log,
    ymode=log,
    xlabel={\small Batch size},
    ylabel={\small Running time (seconds)},
    xtick={1e2,1e3,1e4,1e5,1e6,1e7,1e8},
    xminorticks=false,
    legend pos=north west,
    ]
      \addplot[mark=*,color=red] coordinates {
        (1e2,   .000152)
        (1e3,   .000312)
        (1e4,   .00117)
        (1e5,   .00727)
        (1e6,   .045)
        (1e7,   .229)
        (1e8-1, 1.03)
      };
      \addlegendentry{Parallel list (72(h))}

      \addplot[mark=x,color=blue, loosely dashed] coordinates {
        (1e2,   .00079)
        (1e3,   .00625)
        (1e4,   .0514)
        (1e5,   .557)
        (1e6,   3.74)
        (1e7,   20.2)
        (1e8-1, 83.7)
      };
      \addlegendentry{Parallel list (1)}

      \addplot[mark=o,color=olive, thick, densely dotted] coordinates {
        (1e2,   .000712)
        (1e3,   .00647)
        (1e4,   .0583)
        (1e5,   .489)
        (1e6,   3.35)
        (1e7,   19.6)
        (1e8-1, 93.3)
      };
      \addlegendentry{Sequential list}
    \end{axis}
  \end{tikzpicture}
  \caption{Split}
  \end{subfigure}
  \caption[Augmented skip list running times with varying batch size]{Running
  time of augmented sequence data structure operations with $n = 10^8$ on random
  batches of varying size.}
  \label{fig:aug-sl-batch-size-random}
\end{figure}

Table~\ref{tab:aug-sl-threads} and Figure~\ref{fig:aug-sl-threads} show that
when running our augmented skip list with a random batch of size $k=10^7$ on
$72$ cores with hyper-threading, we see a speedup of $67\times$ for joins and
$78\times$ for splits. For $k=10^4$, we found a speedup of $33\times$ for joins
and $48\times$ for splits. The running times are a factor of two worse than the
times for the unaugmented skip list of
Subsection~\ref{subsec:skip-list-experiment}, which is expected due to the extra
passes through the skip list to update augmented values. Moreover, the
batch-parallel skip list hugely outperforms single-threaded skip lists on all
tested batch sizes, as seen in Table~\ref{tab:aug-sl-batch-size-random} and
Figure~\ref{fig:aug-sl-batch-size-random}.

\begin{table} \centering
  \scriptsize
\begin{tabular}{l|l|l|l|l|l|l|l|l}
  \toprule
    \multirow{2}{*}{Data structure} & \multirow{2}{*}{Operation} &\multicolumn{7}{c}{Batch size $k$} \\
                              &       & $10^2$   & $10^3$  & $10^4$  & $10^5$  & $10^6$ & $10^7$ & $10^8 - 1$ \\
  \hline
  Parallel skip list (72(h))  & split & .000111  & .000192 & .000272 & .000596 & .00281 & .0259  & .252 \\
  \hline
  Parallel skip list (1)      & split & .0000350 & .000148 & .00119  & .0115   & .119   & 1.20   & 11.9  \\
  \hline
  Sequential skip list        & split & .000296  & .00258  & .0219   & .210    & 2.11   & 21.5   & 205  \\
  \bottomrule
\end{tabular}
  \caption[Augmented skip list running times on an adversarial test case]{Running
  time (in seconds) of splitting augmented skip lists with $n= 10^8$ as batch size varies with
  splits taking single elements off the end of the list. This is a difficult
  test case for standard augmented skip lists.}
  \label{tab:aug-sl-batch-size-adverse}
\end{table}
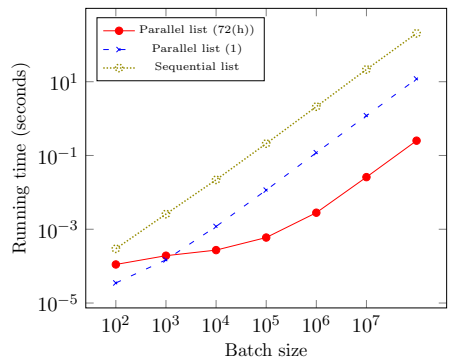
\begin{figure} \centering
  \begin{tikzpicture}[scale=0.7]
  \begin{axis}[
  legend style={font=\tiny},
  xmode=log,
  ymode=log,
  xlabel={\small Batch size},
  ylabel={\small Running time (seconds)},
  xtick={1e2,1e3,1e4,1e5,1e6,1e7},
  xminorticks=false,
  legend pos=north west,
  ]
    \addplot[mark=*,color=red] coordinates {
      (1e2,   .000111)
      (1e3,   .000192)
      (1e4,   .000272)
      (1e5,   .000596)
      (1e6,   .00281)
      (1e7,   .0259)
      (1e8-1, .252)
    };
    \addlegendentry{Parallel list (72(h))}

    \addplot[mark=x,color=blue, loosely dashed] coordinates {
      (1e2,   .000035)
      (1e3,   .000148)
      (1e4,   .00119)
      (1e5,   .0115)
      (1e6,   .119)
      (1e7,   1.2)
      (1e8-1, 11.9)
    };
    \addlegendentry{Parallel list (1)}

    \addplot[mark=o,color=olive, thick, densely dotted] coordinates {
      (1e2,   .000296)
      (1e3,   .00258)
      (1e4,   .0219)
      (1e5,   .21)
      (1e6,   2.11)
      (1e7,   21.5)
      (1e8-1, 205)
    };
    \addlegendentry{Sequential list}
  \end{axis}
  \end{tikzpicture}
  \caption[Augmented skip list running times on an adversarial test case] {
    Running time of splitting augmented skip lists with $n= 10^8$ as batch size varies with
  splits taking off single elements off the end of the list.}
  \label{fig:aug-sl-batch-size-adverse}
\end{figure}%

To more prominently display the work savings that batching provides, we try a
different test case in which a batch of $k$ splits, rather than being chosen at random,
consists of splitting at the last $k$ elements of the sequence in right-to-left order.
This is particularly bad for the sequential skip list because after every split,
it walks to the top of the skip list to update augmented values. In comparison,
when processing the splits as a batch, we update the augmented values in only
one pass.
Table~\ref{tab:aug-sl-batch-size-adverse} and
Figure~\ref{fig:aug-sl-batch-size-adverse} show that, as expected, even on a
single thread, our skip list is significantly faster than the standard
sequential one in this adversarial experiment.

\subsection{Euler Tour Trees}

We compare against sequential dynamic trees data structures. Using the
sequential skip list and splay trees from
Subsection~\ref{subsec:skip-list-experiment}, we build traditional Euler tour
trees. We also compare to ST-trees built on splay
trees~\cite{sleator1983data,sleator1985self}. They achieve $O(\log n)$ amortized
work links and cuts. Though conceptually more complicated than Euler tour trees,
ST-trees are a more streamlined data structure that do not require allocation
beyond initialization and do not require dictionary lookups. We wrote all of
these implementations. In future work, we would like to compare against parallel
data structures such as that of Acar et al.~\cite{acar2017brief}

Because one of the important uses of Euler tour trees is to answer connectivity
queries, we also compare with statically computing the connected components
of the graph. We use the work-efficient parallel
connectivity algorithm designed and implemented by Shun et
al.~\cite{shun2014simple}. We optimistically measure the execution
time of the implementation based only on the execution time of the
connectivity algorithm; we do not include the time taken to maintain
the graph itself, which is non-trivial because the adjacency array
graph representation used in their implementation does not support
edge insertion or deletion easily.

For our experiment, we fix a tree. We set up a trial by adding all the edges of
the tree in pseudorandom order to our data stucture. Then we time how long it
takes to cut and relink the forest at $k$ pseudorandomly sampled edges. We
report median times over three trials. Again, note that our
experimental setup may give the splay tree data structures an
advantage on linking small batches after cutting due to how splay
trees exploit locality. We experimented on three trees, all with $n = 10^7$
vertices: a path graph, a star graph, and a random recursive tree. To form a
random recursive tree over $n$ vertices, for each $1 < i \le n$, draw $j$
uniformly at random from $\set{1, 2, \ldots, i -1}$ and add the edge $\set{j,
i}$.

\begin{table} \centering
  \tiny
\begin{tabular}{l|l|l|l|l|l|l|l|l|l|l|l|l}
  \toprule
    \multirow{2}{*}{Data structure} &
    \multirow{2}{*}{Graph} &
    \multirow{2}{*}{\makecell[cl]{Batch\\ size $k$}} &
    \multirow{2}{*}{Operation} &
    \multicolumn{8}{c}{Number of threads} \\
                                  &                                                         &                         &      & 1     & 2     & 4      & 8      & 16     & 32     & 64     & 72     & 72(h) \\
  \hline
  \multirow{12}{*}{Parallel ETT}  & \multirow{4}{*}{\makecell[cl]{Path\\graph}}             & \multirow{2}{*}{$10^4$} & link & .175  & .109  & .0559  & .0172  & .00988 & .00595 & .00374 & .00345 & .0029\\
                                  &                                                         &                         & cut  & .185  & .129  & .0641  & .0270  & .0139  & .00743 & .00417 & .00378 & .00267 \\
  \cline{3-13}                    &                                                         & \multirow{2}{*}{$10^6$} & link & 7.25  & 5.10  & 2.53   & 1.10   & .548   & .279   & .143   & .131   & .0813\\
                                  &                                                         &                         & cut  & 8.74  & 6.48  & 3.22   & 1.36   & .672   & .348   & .177   & .159   & .0980 \\
  \cline{2-13}                    & \multirow{4}{*}{\makecell[cl]{Random\\recursive\\tree}} & \multirow{2}{*}{$10^4$} & link & .111  & .0579 & .0266  & .0162  & .00849 & .00524 & .00334 & .00316 & .00278\\
                                  &                                                         &                         & cut  & .149  & .0759 & .0358  & .0192  & .00962 & .00525 & .00301 & .00275 & .00199 \\
  \cline{3-13}                    &                                                         & \multirow{2}{*}{$10^6$} & link & 8.77  & 6.31  & 3.20   & 1.34   & .684   & .344   & .182   & .161   & .0972  \\
                                  &                                                         &                         & cut  & 7.87  & 5.87  & 2.96   & 1.26   & .628   & .323   & .171   & .147   & .0882   \\
  \cline{2-13}                    & \multirow{4}{*}{\makecell[cl]{Star\\graph}}             & \multirow{2}{*}{$10^4$} & link & .0154 & .0147 & .00693 & .00533 & .00344 & .00249 & .00203 & .00191 & .00198\\
                                  &                                                         &                         & cut  & .0170 & .0112 & .00733 & .00442 & .00247 & .00136 & .00102 & .00102 & .000922 \\
  \cline{3-13}                    &                                                         & \multirow{2}{*}{$10^6$} & link & 3.49  & 2.05  & 1.01   & .454   & .237   & .125   & .0681  & .0618  & .0408\\
                                  &                                                         &                         & cut  & 2.60  & 1.56  & .740   & .339   & .171   & .0883  & .0467  & .0419  & .0271 \\
  \bottomrule
\end{tabular}
  \caption[Parallel Euler tour tree running times against number of
  threads]{Running time (in seconds) of our batch-parallel Euler tour tree on various graphs with $n = 10^7$.}
  \label{tab:ett-threads}
\end{table}

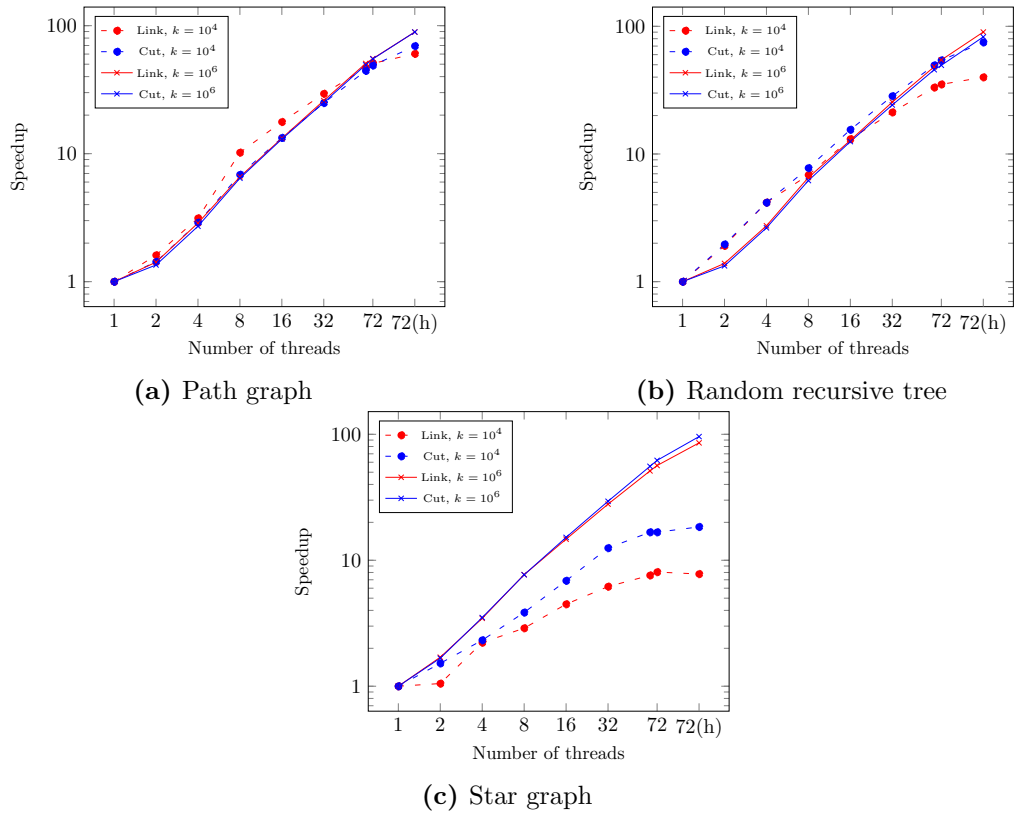
\begin{figure} \centering
  \begin{subfigure}{0.45 \textwidth} \centering
    \begin{tikzpicture}[scale=0.7]
    \begin{axis}[
    legend style={font=\tiny},
    xmode=log,
    ymode=log,
    xlabel={\small Number of threads},
    ylabel={\small Speedup},
    xtick={1,2,4,8,16,32,72},
    extra x ticks = {144},
    extra x tick labels = {72(h)},
    log ticks with fixed point,
    legend pos=north west,
    ]
      \addplot[mark=*,color=red, loosely dashed] coordinates {
        (1,   1)
        (2,   1.61)
        (4,   3.13)
        (8,   10.2)
        (16,  17.7)
        (32,  29.4)
        (64,  46.8)
        (72,  50.7)
        (144, 60.3)
      };
      \addlegendentry{Link, $k=10^4$}

      \addplot[mark=*,color=blue, loosely dashed] coordinates {
        (1,   1)
        (2,   1.43)
        (4,   2.89)
        (8,   6.85)
        (16,  13.3)
        (32,  24.9)
        (64,  44.4)
        (72,  48.9)
        (144, 69.3)
      };
      \addlegendentry{Cut, $k=10^4$}

      \addplot[mark=x,color=red] coordinates {
        (1,   1)
        (2,   1.42)
        (4,   2.87)
        (8,   6.59)
        (16,  13.2)
        (32,  26.0)
        (64,  50.7)
        (72,  55.3)
        (144, 89.2)
      };
      \addlegendentry{Link, $k=10^6$}

      \addplot[mark=x,color=blue] coordinates {
        (1,   1)
        (2,   1.35)
        (4,   2.71)
        (8,   6.43)
        (16,  13.0)
        (32,  25.1)
        (64,  49.4)
        (72,  55.0)
        (144, 89.2)
      };
      \addlegendentry{Cut, $k=10^6$}
    \end{axis}
  \end{tikzpicture}
  \caption{Path graph}
  \end{subfigure}
  \begin{subfigure}{0.45 \textwidth} \centering
    \begin{tikzpicture}[scale=0.7]
    \begin{axis}[
    legend style={font=\tiny},
    xmode=log,
    ymode=log,
    xlabel={\small Number of threads},
    ylabel={\small Speedup},
    xtick={1,2,4,8,16,32,72},
    extra x ticks = {144},
    extra x tick labels = {72(h)},
    log ticks with fixed point,
    legend pos=north west,
    ]
      \addplot[mark=*,color=red, loosely dashed] coordinates {
        (1,   1)
        (2,   1.91)
        (4,   4.17)
        (8,   6.85)
        (16,  13.1)
        (32,  21.2)
        (64,  33.2)
        (72,  35.1)
        (144, 39.9)
      };
      \addlegendentry{Link, $k=10^4$}

      \addplot[mark=*,color=blue, loosely dashed] coordinates {
        (1,   1)
        (2,   1.96)
        (4,   4.16)
        (8,   7.76)
        (16,  15.5)
        (32,  28.4)
        (64,  49.5)
        (72,  54.2)
        (144, 74.9)
      };
      \addlegendentry{Cut, $k=10^4$}

      \addplot[mark=x,color=red] coordinates {
        (1,   1)
        (2,   1.39)
        (4,   2.74)
        (8,   6.54)
        (16,  12.8)
        (32,  25.5)
        (64,  48.2)
        (72,  54.5)
        (144, 90.2)
      };
      \addlegendentry{Link, $k=10^6$}

      \addplot[mark=x,color=blue] coordinates {
        (1,   1)
        (2,   1.33)
        (4,   2.64)
        (8,   6.19)
        (16,  12.5)
        (32,  24.2)
        (64,  45.7)
        (72,  49.5)
        (144, 82.6)
      };
      \addlegendentry{Cut, $k=10^6$}
    \end{axis}
  \end{tikzpicture}
  \caption{Random recursive tree}
  \end{subfigure}
  \begin{subfigure}{0.45 \textwidth} \centering
    \begin{tikzpicture}[scale=0.7]
    \begin{axis}[
    legend style={font=\tiny},
    xmode=log,
    ymode=log,
    xlabel={\small Number of threads},
    ylabel={\small Speedup},
    xtick={1,2,4,8,16,32,72},
    extra x ticks = {144},
    extra x tick labels = {72(h)},
    log ticks with fixed point,
    legend pos=north west,
    ]
      \addplot[mark=*,color=red, loosely dashed] coordinates {
        (1,   1)
        (2,   1.05)
        (4,   2.22)
        (8,   2.89)
        (16,  4.48)
        (32,  6.18)
        (64,  7.59)
        (72,  8.06)
        (144, 7.78)
      };
      \addlegendentry{Link, $k=10^4$}

      \addplot[mark=*,color=blue, loosely dashed] coordinates {
        (1,   1)
        (2,   1.52)
        (4,   2.32)
        (8,   3.85)
        (16,  6.88)
        (32,  12.5)
        (64,  16.7)
        (72,  16.7)
        (144, 18.4)
      };
      \addlegendentry{Cut, $k=10^4$}

      \addplot[mark=x,color=red] coordinates {
        (1,   1)
        (2,   1.70)
        (4,   3.46)
        (8,   7.69)
        (16,  14.7)
        (32,  27.9)
        (64,  51.2)
        (72,  56.5)
        (144, 85.5)
      };
      \addlegendentry{Link, $k=10^6$}

      \addplot[mark=x,color=blue] coordinates {
        (1,   1)
        (2,   1.67)
        (4,   3.51)
        (8,   7.67)
        (16,  15.2)
        (32,  29.4)
        (64,  55.7)
        (72,  62.1)
        (144, 95.9)
      };
      \addlegendentry{Cut, $k=10^6$}
    \end{axis}
  \end{tikzpicture}
  \caption{Star graph}
  \end{subfigure}
  \caption[Batch-parallel Euler tour tree speedup]{Speedup of our parallel Euler
  tour trees on a various forests of size $n = 10^7$.}
  \label{fig:ett-threads}
\end{figure}%

\begin{table} \centering
  \scriptsize
\begin{tabular}{l|l|l}
  \toprule
    Data structure                                & Graph                 & \\
  \hline
     \multirow{3}{*}{Static connectivity (72(h))} & Path graph            & .576 \\
     \cline{2-3}                                  & Random recursive tree & .174 \\
     \cline{2-3}                                  & Star graph            & .113 \\
  \bottomrule
\end{tabular}
  \caption[Static connectivity on various graphs with $n=10^7$]{Running time (in seconds) of static connectivity on various graphs with $n = 10^7$}
  \label{tab:static-conn}
\end{table}

\begin{table} \centering
  \scriptsize
\begin{tabular}{l|l|l|l|l|l|l|l|l}
  \toprule
    \multirow{2}{*}{Graph} &
    \multirow{2}{*}{Data structure} &
    \multirow{2}{*}{Operation} &
    \multicolumn{6}{c}{Batch size $k$} \\
                                                           &                                       &      & $10^2$   & $10^3$  & $10^4$ & $10^5$ & $10^6$ & $10^7 - 1$\\
  \hline
  \multirow{10}{*}{\makecell[cl]{Path\\graph}}             & \multirow{2}{*}{Parallel ETT (72(h))} & link & .000276  & .000642  & .00290 & .0165  & .0813  & .305 \\
                                                           &                                       & cut  & .000297  & .000714  & .00267 & .0169  & .0980   & .408 \\
   \cline{2-9}                                             & \multirow{2}{*}{Parallel ETT (1)}     & link & .000950  & .00784  & .175   & 1.29   & 7.25   & 29.2\\
                                                           &                                       & cut  & .00223   & .0187   & .185   & 1.40   & 8.74   & 37.6 \\
   \cline{2-9}                                             & \multirow{2}{*}{Seq.\ skip list ETT}  & link & .000767  & .00617  & .131   & 1.04   & 6.77   & 33.4 \\
                                                           &                                       & cut  & .00148   & .0144   & .129   & 1.03   & 6.73   & 35.0 \\
   \cline{2-9}                                             & \multirow{2}{*}{Splay ETT}            & link & .000408    & .00345   & .0631  & .605   & 4.82   & 27.3 \\
                                                           &                                       & cut  & .00109     & .0103   & .0922  & .484   & 5.51   & 31.7 \\
   \cline{2-9}                                             & \multirow{2}{*}{ST-tree}              & link & .0000420 & .000418 & .00499 & .0818  & 1.04   & 5.09 \\
                                                           &                                       & cut  & .000562     & .00471   & .0405  & .310   & 1.88   & 7.27 \\
  \hline
  \multirow{10}{*}{\makecell[cl]{Random\\recursive\\tree}} & \multirow{2}{*}{Parallel ETT (72(h))} & link & .000208  & .000584  & .00279 & .0153  & .0972   & .430\\
                                                           &                                       & cut  & .000237  & .000599  & .00195 & .013  & .0882   & .420 \\
   \cline{2-9}                                             & \multirow{2}{*}{Parallel ETT (1)}     & link & .000674  & .00560  & .144   & 1.25   & 8.77   & 42.3 \\
                                                           &                                       & cut  & .00113   & .0100   & .137   & 1.12   & 7.82   & 37.9 \\
   \cline{2-9}                                             & \multirow{2}{*}{Seq.\ skip list ETT}  & link & .000572  & .00520  & .107  & 1.06   & 9.17  & 45.5 \\
                                                           &                                       & cut  & .00110   & .0106  & .105  & 1.05   & 9.06   & 45.9 \\
   \cline{2-9}                                             & \multirow{2}{*}{Splay ETT}            & link & .000326  & .00311  & .0461  & .594   & 6.14   & 35.3 \\
                                                           &                                       & cut  & .000942  & .00890  & .0856  & .839   & 7.20   & 39.4 \\
   \cline{2-9}                                             & \multirow{2}{*}{ST-tree}              & link & .0000780 & .000975  & .0115  & .171   & 1.79   & 9.17 \\
                                                           &                                       & cut  & .000298   & .00284   & .0263  & .259   & 2.09   & 9.62 \\
  \hline
  \multirow{10}{*}{\makecell[cl]{Star\\graph}}             & \multirow{2}{*}{Parallel ETT (72(h))} & link & .000190  & .000465  & .00198 & .00558  & .0408  & .359 \\
                                                           &                                       & cut  & .000221  & .000475  & .000922 & .00323  & .0271  & .251 \\
   \cline{2-9}                                             & \multirow{2}{*}{Parallel ETT (1)}     & link & .000182  & .00170  & .0154  & .357   & 3.49   & 32.7 \\
                                                           &                                       & cut  & .000196  & .00194  & .0170  & .289   & 2.60   & 23.1 \\
   \cline{2-9}                                             & \multirow{2}{*}{Seq.\ skip list ETT}  & link & .000189  & .00212  & .0197  & .293   & 2.97   & 29.4 \\
                                                           &                                       & cut  & .000323  & .00317  & .0303  & .311   & 3.15   & 30.1 \\
   \cline{2-9}                                             & \multirow{2}{*}{Splay ETT}            & link & .000151  & .00124  & .0131  & .207   & 2.09   & 19.7 \\
                                                           &                                       & cut  & 2.85     & 2.85   & 2.86  & 3.07   & 5.09   & 25.9 \\
   \cline{2-9}                                             & \multirow{2}{*}{ST-tree}              & link & .0000150 & .000144 & .00123 & .0169  & .257   & 2.56 \\
                                                           &                                       & cut  & .0000379 & .000367 & .00336 & .0325  & .323   & 3.24 \\
  \bottomrule
\end{tabular}
  \caption[Dynamic trees data structure running times with varying batch size]{Running time (in seconds) of
  dynamic trees data structures on various graphs with $n = 10^7$.}
  \label{tab:ett-batch}
\end{table}
\begin{figure} \centering
  \begin{subfigure}{0.45 \textwidth} \centering
    \begin{tikzpicture}[scale=0.7]
    \begin{axis}[
    legend style={font=\tiny},
    xmode=log,
    ymode=log,
    xlabel={\small Batch size},
    ylabel={\small Running time (seconds)},
    xtick={1e2,1e3,1e4,1e5,1e6,1e7},
    xminorticks=false,
    legend pos=north west,
    ]
      \addplot[mark=*,color=red] coordinates {
        (1e2,   .000276)
        (1e3,   .000642)
        (1e4,   .0029)
        (1e5,   .0165)
        (1e6,   .0813)
        (1e7-1,   .305)
      };
      \addlegendentry{Parallel ETT (72(h))}

      \addplot[mark=x,color=blue,loosely dashed] coordinates {
        (1e2,   .00095)
        (1e3,   .00784)
        (1e4,   .175)
        (1e5,   1.29)
        (1e6,   7.25)
        (1e7-1,   29.2)
      };
      \addlegendentry{Parallel ETT (1)}

      \addplot[mark=o,color=olive, thick, densely dotted] coordinates {
        (1e2,   .000767)
        (1e3,   .00617)
        (1e4,   .131)
        (1e5,   1.04)
        (1e6,   6.77)
        (1e7-1,  33.4)
      };
      \addlegendentry{Seq.\ skip list ETT}

      \addplot[mark=square,color=purple,dashdotted] coordinates {
        (1e2,   .000408)
        (1e3,   .00345)
        (1e4,   .0631)
        (1e5,   .605)
        (1e6,   4.82)
        (1e7-1,  27.3)
      };
      \addlegendentry{Splay ETT}

      \addplot[mark=triangle,color=brown,densely dashed] coordinates {
        (1e2,   .000042)
        (1e3,   .000418)
        (1e4,   .00499)
        (1e5,   .0818)
        (1e6,   1.04)
        (1e7-1,  5.09)
      };
      \addlegendentry{ST-tree}

      \addplot[mark=none,color=lightgray,solid] coordinates {
        (1e2,   .576)
        (1e3,   .576)
        (1e4,   .576)
        (1e5,   .576)
        (1e6,   .576)
        (1e7-1, .576)
      };
      \addlegendentry{Static conn.\ (72(h))}
    \end{axis}
  \end{tikzpicture}
  \caption{Path graph, link}
  \end{subfigure}
  \begin{subfigure}{0.45 \textwidth} \centering
    \begin{tikzpicture}[scale=0.7]
    \begin{axis}[
    legend style={font=\tiny},
    xmode=log,
    ymode=log,
    xlabel={\small Batch size},
    ylabel={\small Running time (seconds)},
    xtick={1e2,1e3,1e4,1e5,1e6,1e7},
    xminorticks=false,
    legend pos=north west,
    ]
      \addplot[mark=*,color=red] coordinates {
        (1e2,   .000287)
        (1e3,   .000714)
        (1e4,   .00267)
        (1e5,   .0169)
        (1e6,   .098)
        (1e7-1,   .408)
      };
      \addlegendentry{Parallel ETT (72(h))}

      \addplot[mark=x,color=blue,loosely dashed] coordinates {
        (1e2,   .00223)
        (1e3,   .0187)
        (1e4,   .185)
        (1e5,   1.4)
        (1e6,   8.74)
        (1e7-1,   37.6)
      };
      \addlegendentry{Parallel ETT (1)}

      \addplot[mark=o,color=olive, thick, densely dotted] coordinates {
        (1e2,   .00148)
        (1e3,   .0144)
        (1e4,   .129)
        (1e5,   1.03)
        (1e6,   6.73)
        (1e7-1,  35.0)
      };
      \addlegendentry{Seq.\ skip list ETT}

      \addplot[mark=square,color=purple,dashdotted] coordinates {
        (1e2,   .00109)
        (1e3,   .0103)
        (1e4,   .0922)
        (1e5,   .776)
        (1e6,   5.51)
        (1e7-1,  31.7)
      };
      \addlegendentry{Splay ETT}

      \addplot[mark=triangle,color=brown,densely dashed] coordinates {
        (1e2,   .000562)
        (1e3,   .00471)
        (1e4,   .0405)
        (1e5,   .31)
        (1e6,   1.88)
        (1e7-1,  7.27)
      };
      \addlegendentry{ST-tree}
    \end{axis}
  \end{tikzpicture}
  \caption{Path graph, cut}
  \end{subfigure}
  \begin{subfigure}{0.45 \textwidth} \centering
    \begin{tikzpicture}[scale=0.7]
    \begin{axis}[
    legend style={font=\tiny},
    xmode=log,
    ymode=log,
    xlabel={\small Batch size},
    ylabel={\small Running time (seconds)},
    xtick={1e2,1e3,1e4,1e5,1e6,1e7},
    xminorticks=false,
    legend pos=north west,
    ]
      \addplot[mark=*,color=red] coordinates {
        (1e2,   .000208)
        (1e3,   .000584)
        (1e4,   .00279)
        (1e5,   .0153)
        (1e6,   .0972)
        (1e7-1,   .43)
      };
      \addlegendentry{Parallel ETT (72(h))}

      \addplot[mark=x,color=blue,loosely dashed] coordinates {
        (1e2,   .000674)
        (1e3,   .0056)
        (1e4,   .144)
        (1e5,   1.25)
        (1e6,   8.77)
        (1e7-1,   42.3)
      };
      \addlegendentry{Parallel ETT (1)}

      \addplot[mark=o,color=olive, thick, densely dotted] coordinates {
        (1e2,   .000572)
        (1e3,   .0052)
        (1e4,   .107)
        (1e5,   1.06)
        (1e6,   9.17)
        (1e7-1,  45.5)
      };
      \addlegendentry{Seq.\ skip list ETT}

      \addplot[mark=square,color=purple,dashdotted] coordinates {
        (1e2,   .000326)
        (1e3,   .00311)
        (1e4,   .0461)
        (1e5,   .594)
        (1e6,   6.14)
        (1e7-1,  35.3)
      };
      \addlegendentry{Splay ETT}

      \addplot[mark=triangle,color=brown,densely dashed] coordinates {
        (1e2,   .000078)
        (1e3,   .000975)
        (1e4,   .0115)
        (1e5,   .171)
        (1e6,   1.79)
        (1e7-1,  9.17)
      };
      \addlegendentry{ST-tree}

      \addplot[mark=none,color=lightgray,solid] coordinates {
        (1e2,   .174)
        (1e3,   .174)
        (1e4,   .174)
        (1e5,   .174)
        (1e6,   .174)
        (1e7-1, .174)
      };
      \addlegendentry{Static conn.\ (72(h))}
    \end{axis}
  \end{tikzpicture}
  \caption{Random recursive tree, link}
  \end{subfigure}
  \begin{subfigure}{0.45 \textwidth} \centering
    \begin{tikzpicture}[scale=0.7]
    \begin{axis}[
    legend style={font=\tiny},
    xmode=log,
    ymode=log,
    xlabel={\small Batch size},
    ylabel={\small Running time (seconds)},
    xtick={1e2,1e3,1e4,1e5,1e6,1e7},
    xminorticks=false,
    legend pos=north west,
    ]
      \addplot[mark=*,color=red] coordinates {
        (1e2,   .000237)
        (1e3,   .000599)
        (1e4,   .00195)
        (1e5,   .013)
        (1e6,   .0882)
        (1e7-1,   .42)
      };
      \addlegendentry{Parallel ETT (72(h))}

      \addplot[mark=x,color=blue,loosely dashed] coordinates {
        (1e2,   .00113)
        (1e3,   .01)
        (1e4,   .137)
        (1e5,   1.12)
        (1e6,   7.82)
        (1e7-1,   37.9)
      };
      \addlegendentry{Parallel ETT (1)}

      \addplot[mark=o,color=olive, thick, densely dotted] coordinates {
        (1e2,   .00103)
        (1e3,   .00880)
        (1e4,   .0981)
        (1e5,   1.08)
        (1e6,   9.79)
        (1e7-1,  56.4)
      };
      \addlegendentry{Seq.\ skip list ETT}

      \addplot[mark=square,color=purple,dashdotted] coordinates {
        (1e2,   .0011)
        (1e3,   .0106)
        (1e4,   .105)
        (1e5,   1.05)
        (1e6,   9.06)
        (1e7-1,  45.9)
      };
      \addlegendentry{Splay ETT}

      \addplot[mark=triangle,color=brown,densely dashed] coordinates {
        (1e2,   .000298)
        (1e3,   .00284)
        (1e4,   .0263)
        (1e5,   .259)
        (1e6,   2.09)
        (1e7-1,  9.62)
      };
      \addlegendentry{ST-tree}
    \end{axis}
  \end{tikzpicture}
  \caption{Random recursive tree, cut}
  \end{subfigure}
  \begin{subfigure}{0.45 \textwidth} \centering
    \begin{tikzpicture}[scale=0.7]
    \begin{axis}[
    legend style={font=\tiny},
    xmode=log,
    ymode=log,
    xlabel={\small Batch size},
    ylabel={\small Running time (seconds)},
    xtick={1e2,1e3,1e4,1e5,1e6,1e7},
    xminorticks=false,
    legend pos=north west,
    ]
      \addplot[mark=*,color=red] coordinates {
        (1e2,   .00019)
        (1e3,   .000465)
        (1e4,   .00198)
        (1e5,   .00558)
        (1e6,   .0408)
        (1e7-1,   .359)
      };
      \addlegendentry{Parallel ETT (72(h))}

      \addplot[mark=x,color=blue,loosely dashed] coordinates {
        (1e2,   .000182)
        (1e3,   .0017)
        (1e4,   .0154)
        (1e5,   0.357)
        (1e6,   3.49)
        (1e7-1,   32.7)
      };
      \addlegendentry{Parallel ETT (1)}

      \addplot[mark=o,color=olive, thick, densely dotted] coordinates {
        (1e2,   .000189)
        (1e3,   .00212)
        (1e4,   .0197)
        (1e5,   0.293)
        (1e6,   2.97)
        (1e7-1,  29.4)
      };
      \addlegendentry{Seq.\ skip list ETT}

      \addplot[mark=square,color=purple,dashdotted] coordinates {
        (1e2,   .000151)
        (1e3,   .00124)
        (1e4,   .0131)
        (1e5,   .207)
        (1e6,   2.09)
        (1e7-1,  19.7)
      };
      \addlegendentry{Splay ETT}

      \addplot[mark=triangle,color=brown,densely dashed] coordinates {
        (1e2,   .000015)
        (1e3,   .000144)
        (1e4,   .00123)
        (1e5,   .0169)
        (1e6,   0.257)
        (1e7-1,  2.56)
      };
      \addlegendentry{ST-tree}

      \addplot[mark=none,color=lightgray,solid] coordinates {
        (1e2,   .113)
        (1e3,   .113)
        (1e4,   .113)
        (1e5,   .113)
        (1e6,   .113)
        (1e7-1, .113)
      };
      \addlegendentry{Static conn.\ (72(h))}
    \end{axis}
  \end{tikzpicture}
  \caption{Star graph, link}
  \end{subfigure}
  \begin{subfigure}{0.45 \textwidth} \centering
    \begin{tikzpicture}[scale=0.7]
    \begin{axis}[
    legend style={font=\tiny},
    xmode=log,
    ymode=log,
    xlabel={\small Batch size},
    ylabel={\small Running time (seconds)},
    xtick={1e2,1e3,1e4,1e5,1e6,1e7},
    xminorticks=false,
    legend pos=north west,
    ]
      \addplot[mark=*,color=red] coordinates {
        (1e2,   .000221)
        (1e3,   .000475)
        (1e4,   .000922)
        (1e5,   .00323)
        (1e6,   .0271)
        (1e7-1,   .251)
      };
      \addlegendentry{Parallel ETT (72(h))}

      \addplot[mark=x,color=blue,loosely dashed] coordinates {
        (1e2,   .000196)
        (1e3,   .00194)
        (1e4,   .017)
        (1e5,   0.289)
        (1e6,   2.6)
        (1e7-1,   23.1)
      };
      \addlegendentry{Parallel ETT (1)}

      \addplot[mark=o,color=olive, thick, densely dotted] coordinates {
        (1e2,   .000323)
        (1e3,   .00317)
        (1e4,   .0303)
        (1e5,   0.311)
        (1e6,   3.15)
        (1e7-1,  30.1)
      };
      \addlegendentry{Seq.\ skip list ETT}

      \addplot[mark=square,color=purple,dashdotted] coordinates {
        (1e2,   2.85)
        (1e3,   2.85)
        (1e4,   2.86)
        (1e5,   3.07)
        (1e6,   5.09)
        (1e7-1,  25.9)
      };
      \addlegendentry{Splay ETT}

      \addplot[mark=triangle,color=brown,densely dashed] coordinates {
        (1e2,   .0000379)
        (1e3,   .000367)
        (1e4,   .00336)
        (1e5,   .0325)
        (1e6,   0.323)
        (1e7-1,  3.24)
      };
      \addlegendentry{ST-tree}
    \end{axis}
  \end{tikzpicture}
  \caption{Star graph, cut}
  \end{subfigure}
  \caption[Dynamic trees data structure running times with varying batch
  size]{Running time of dynamic trees data structure operations on trees
  of size $n=10^7$ with varying batch size.}
  \label{fig:ett-batch}%
\end{figure}
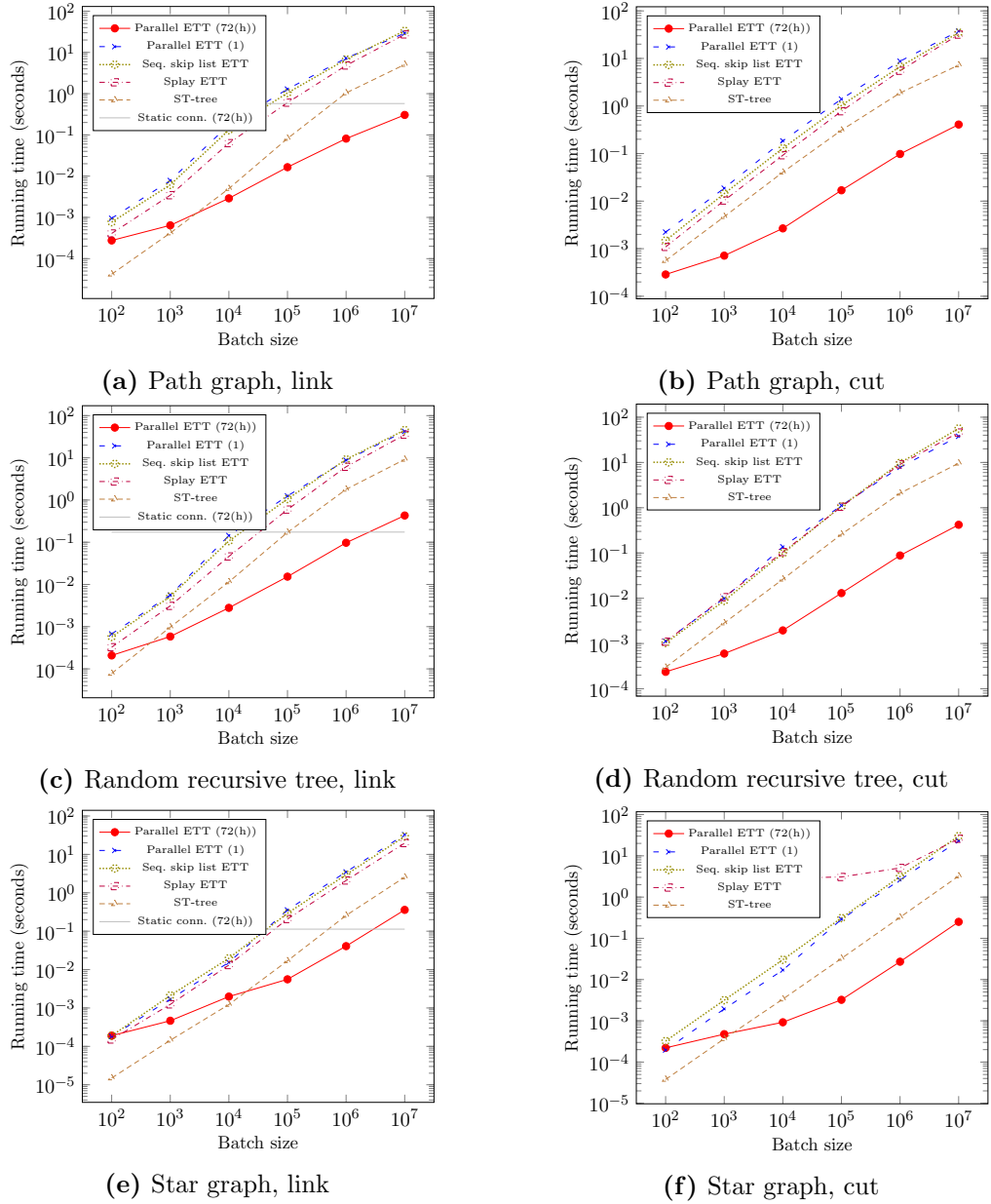%

Table~\ref{tab:ett-threads} and  Figure~\ref{fig:ett-threads} display the speedup of our parallel Euler tour
tree algorithms with a batch sizes of $k=10^4$ and $k=10^6$.
When running on $72$ cores with hyper-threading, we get good
speedup ranging from $82\times$ to $96\times$ for $k=10^6$ across all tested
graphs. For $k=10^4$, we found speedup ranging from $7.5\times$ to $75\times$
where the worst speedup was on the star graph. On the star graph, the
single-threaded running time is already fast for batch size
$k=10^4$, so there is not as much room for speedup.

In Table~\ref{tab:ett-batch} and Figure~\ref{fig:ett-batch}, we show the running times of our Euler tour tree
along with the times for sequential dynamic trees data structures. We also show
in Table~\ref{tab:static-conn} the time to run the static connectivity algorithm
on the full graphs for comparison. On large batch sizes, parallelism beats all
the sequential data structures, as expected. Though ST-trees are faster than
Euler tour trees sequentially and are unusually fast on the star graph due to
them performing well on graphs with small diameter, our parallel Euler tour tree
eventually outspeeds ST-trees on large batches even on the star graph.
(As an artifact of our testing setup, the splay-tree-based Euler tour tree
performs poorly on the star graph. The access pattern on the splay trees when
constructing the graph leads to a very deep splay tree, so the first few cut
operations after the graph construction setup are expensive.) In addition, the
performance of our Euler tour tree running on a single thread is similar to that
of conventional sequential Euler tour trees. We also see that the time to update
our Euler tour tree is much less than the time to statically compute
connectivity for all but the largest batch sizes.

\section{Discussion}

We showed that skip lists are a simple, fast data structure for parallel joining
and splitting of sequences and that we can use these skip lists to build a
batch-parallel Euler tour tree. Both of these data structures have good
theoretical efficiency bounds and achieve good performance in
practice. We hope that our work not only brings greater understanding
to the field of parallel data structures but also serves as a stepping
stone towards efficient parallel algorithms for dynamic graph
problems. In particular, Holm et al.\ give a dynamic connectivity
algorithm achieving $O\left(\log^2 n\right)$ amortized work per update
in which augmented Euler tour trees play a crucial
role~\cite{holm2001poly}. We believe that applying our parallel
Euler tour trees along with several other techniques can efficiently
parallelize Holm et al.'s algorithm. Obtaining a batch-parallel
solution for the general dynamic connectivity problem is an
interesting research problem that has not been adequately addressed in
the literature.

\section*{Acknowledgements}
Thanks to the reviewers for helpful comments.
This work was supported in part by NSF grants CCF-1408940, CCF-1533858, and CCF-1629444.

\bibliographystyle{abbrv}
\bibliography{ref}

\appendix

\section{Model}\label{sec:app-model}

The \emph{Multi-Threaded Random-Access Machine (\mpram{})} consists of a set of
\processes{} that share an unbounded memory.  Each \process{} is
essentially a Random Access Machine---it runs a program stored
in memory, has a constant number of registers, and uses standard RAM
instructions (including an \insend{} to finish the computation).  The
only difference between the \mpram{} and a RAM is the \forkins{} instruction that takes a positive
integer $k$ and forks $k$ new child \processes{}.  Each child \process{}
receives a unique identifier in the range $[1,\ldots,k]$ in its first
register and otherwise has the same state as the parent, which
has a $0$ in that register. All children start by running the next
instruction.  When a \process{} performs a \forkins{}, it is suspended until
all the children terminate (execute an \insend{} instruction).  A
computation starts with a single root \process{} and finishes when that
root \process{} terminates. This model supports what is often referred to as
nested parallelism. If the root \process{} never forks, it is a
standard sequential program.

We note that we can simulate the CRCW PRAM equipped with the same
operations with an additional $O(\log^{*} n)$ factor in the depth due
to load-balancing. Furthermore, a PRAM algorithm using $P$ processors
and $T$ time can be simulated in our model with $PT$ work and $T$
depth. We equip the model with a compare-and-swap operation (see
Section~\ref{sec:prelims}) in this paper.

Lastly, we define the cost-bounds for this model.  A computation can
be viewed as a series-parallel DAG in which each instruction is a
vertex, sequential instructions are composed in series, and the forked
sub\processes{} are composed in parallel. The \defn{work} of a
computation is the number of vertices and the \defn{depth}
(\defn{span}) is the length of the longest path in the DAG. We refer
the interested reader to~\cite{blelloch2018introduction-draft} for more details about
the model.

\section{Algorithm Implementation}\label{sec:app-implementations}

\subsection{Skip Lists}\label{sec:app-skip-list-implementation}

In our implementation of our skip lists, instead of representing an element of
height $h$ as $h$ distinct nodes, we instead allocate an array holding $h$
\textproc{left} and \textproc{right} pointers. This avoids jumping around in memory
when traversing up direct parents. In fact, we allocate an array whose size is
$h$ rounded up to the next power of two. This decreases the number of
distinct-sized arrays, which makes memory allocation easier when performing
concurrent allocation. We also cap the height at $32$, again for easier
allocation. We set the probability of a node having a direct parent to be $p = 1/2$.

We also need to be careful about read-write reordering on architectures with
relaxed memory consistency. For \textproc{Join} (Algorithm~\ref{alg:join}),
if the reads from the searches in lines~\ref{line:join-search-left}
to~\ref{line:join-search-right} are reordered before the write on
line~\ref{line:join-set-left}, then we can fail to find a parent link that should
be added. Thus we disallow reads from being reordered before
line~\ref{line:join-set-left}.

For augmented skip lists, instead of keeping $h$ \textproc{needs\_update}
booleans for each element, we keep a single integer that saves the
lowest level on which the element needs an update. This works because if a node
needs updating, then all its direct ancestors need updating as well.

Another optimization saves a constant factor in the work for batch split.
In \textproc{BatchUpdateValues}, we first walk up the skip list claiming nodes
and then walk back down to update all the augmented values. We perform these two
passes because in order to update a node's value, we need to know all of its
childrens' values are already updated too, which is easier to coordinate when
walking top-down through the list. However, in a batch split, after cutting up
the list, no nodes on the bottom level share any ancestors. As a result, we can
update all augmented values in a single pass walking up the list without even
using CAS.

\subsection{Euler Tour Trees}\label{sec:app-ett-implementation}

We implemented our parallel Euler tour tree algorithms, making several
adjustments for performance and for ease of implementation. For simplicity, we
use the unaugmented skip lists and do not support subtree queries.

To achieve good parallelism, we need to allocate and deallocate
skip list nodes in parallel. We use lock-free concurrent fixed-size allocators
that rely on both global and local pools. To reduce the number of fixed-size
allocators used, we constrain the skip list heights and arrays as described
in Appendix~\ref{sec:app-skip-list-implementation}.

For the dictionary $\codevar{edges}$, we use the deterministic hash table
dictionary from the Problem Based Benchmark Suite (PBBS)~\cite{shun2012brief}.
This hash table is based upon a phase-concurrent hash table developed by Shun and
Blelloch~\cite{shun2014phase}. As an additional storage optimization, for an
edge $\set{u, v}$ where $u < v$, we only store $(u,v)$ in our dictionary and use
the $\textproc{twin}$ pointer to look up $(v,u)$.

Instead of performing a semisort when batch joining, we found it faster to use
the parallel radix sort from PBBS.

For batch cut, we do not use list tail-finding because efficient list
tail-finding is challenging to implement. Instead, we opt for a
recursive batch cut algorithm.  Recall why we used list tail-finding
in Algorithm~\ref{alg:et-batch-cut}: we do not want to spend too much
time walking around adjacent edges to find one that is unmarked. In
our recursive batch cut algorithm, we resolve the issue by randomly
selecting a constant fraction of the edges from the input to ignore
and not cut. Then if we walk around adjacent edges naively, the number
of edges we need to walk around until we see an unmarked edge is
constant in expectation and $O(\log n)$ with high probability. Thus we
can cut out the unignored edges quickly. Then we recurse on the
ignored edges. This increases the depth by a factor of $O(\log k)$ but
does not asymptotically affect the work.

\section{Skip list correctness}\label{appendix:skip-list-correctness}

\subsection{Proof of
Theorem~\ref{thm:join-correctness} --- Concurrent Join Correctness}\label{appendix:join-correctness}

Here we want to prove that a phase of concurrent \textproc{Join}
(Algorithm~\ref{alg:join}) calls on unaugmented skip lists behaves
correctly.

We first set up several definitions. Start by fixing a finite set of
elements in an initial, valid skip list. All the \textproc{up} and
\textproc{down} pointers are fixed, but new \textproc{left} and
\textproc{right} pointers will be set throughout a phase of joins. As
such, we will identify the state of a list $L'$ with the set of
\textproc{left} and \textproc{right} links set in $L'$. For two nodes
$u$ and $v$, we write $(u, v)_R$ to refer to the link $u \rightarrow
\textproc{right} = v$ and $(u, v)_L$ to refer to the link $v
\rightarrow \textproc{left} = u$. We write $(u, v)$ to refer to $(u,
v)_L$ and $(u, v)_R$ collectively; for a set $S$, $(u,v) \in S$ if
$(u,v)_L \in S$ and $(u,v)_R \in S$.

We define an operator $\textsc{ToAdd}(K, L')$ that takes a set of links $K$ and a skip list
$L'$. If $K$ is a set of links that are to be added to $L$
through calls to \textproc{Join}, $\textsc{ToAdd}(K, L')$ defines what additional links at
higher levels \textproc{Join} should find and add to $L'$ as a result of adding
$K$. We define $\textsc{ToAdd}(K, L') = \bigcup_i \textsc{ToAdd}^{(i)}(K,L')$ level-by-level. For level $0$,
$\textsc{ToAdd}^{(0)}(K, L')$ consists of all level-$0$ links in $K$. For $i > 0$, a level-$i$
link $\ell = (v_L, v_R)$ is in $\textsc{ToAdd}^{(i)}(K, L')$ if
\begin{enumerate}[label=(\alph*)]
    \item $\ell \in K$, or
    \item $\ell \not\in L'$ and there is a rightward path from $v_L \rightarrow \textproc{down}$ to
      $v_R \rightarrow \textproc{down}$ consisting of links from $K$ and
      $\textsc{ToAdd}^{(i-1)}(K, L')$, and at least one link on this path is from
      $\textsc{ToAdd}^{(i-1)}(K, L')$.
\end{enumerate}
As an example, in the lower diagram in figure~\ref{fig:sl-join-split}, if the solid links are in
$L'$ and the dashed links at the bottom level are in $K$, then the dashed links
as a whole form $\textsc{ToAdd}(K, L')$.

Note that in the second condition of the definition of $\textsc{ToAdd}^{(i)}(K,L')$, we require
that at least one link on the path is from
from $\textsc{ToAdd}^{(i-1)}(K, L')$. This signifies that there is at least one pending link
$\ell'$ at level $i-1$ such that calling \textproc{Join} on $\ell'$ might find
$\ell$ when searching for the parent link of $\ell'$. This requirement also enforces that
$\textsc{ToAdd}(\emptyset, L') = \emptyset$.

We formally present the theorem we wish to prove:
\begin{theorem}
  Starting with a valid skip list $L$ and a set of initially disjoint join
  operations $J$ (given as a set of links to add at the bottom level of $L$),
  after all join operations in $J$ complete, the final skip list is $L_F = L
  \cup \textsc{ToAdd}(J, L)$.
\end{theorem}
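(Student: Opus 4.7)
The plan is to induct on the skip list level $i$, proving both containment directions simultaneously: at each level $i$, the set of links present in the final skip list restricted to level $i$ equals $L^{(i)} \cup \textsc{ToAdd}^{(i)}(J, L)$. For the base case $i = 0$, every call in $J$ is an input to exactly one top-level \textproc{Join} invocation, so the CAS on line~\ref{line:join-set-right} of Algorithm~\ref{alg:join} succeeds for each input link exactly once and establishes both $(u,v)_R$ and $(u,v)_L$ before any recursive call is issued; conversely, no level-$0$ link not in $J$ is ever written, since every call to \textproc{Join} is triggered either by an input in $J$ or as a recursive call at level $\geq 1$. Note also that $J$ being ``initially disjoint'' means no two inputs conflict at line~\ref{line:join-set-right}.

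For the inductive step, assume the claim holds for level $i$. I would prove soundness at level $i+1$ first: any link $\ell = (v_L, v_R)$ written at level $i+1$ comes from a recursive call whose arguments are the outputs of \textproc{SearchLeft} and \textproc{SearchRight} from some level-$i$ endpoints $u_L, u_R$ that were just joined. The searches walk rightwards (resp.\ leftwards) along \textproc{right}/\textproc{left} pointers that only ever get set, never cleared, and both searches ascend via a \textproc{down}-\textproc{up} edge that was fixed at initialization. Thus any parent they return is a true level-$(i+1)$ node reachable by a rightward path at level $i$ through links in $L^{(i)} \cup \textsc{ToAdd}^{(i)}(J, L)$ (by the inductive hypothesis), with at least one link on that path belonging to $\textsc{ToAdd}^{(i)}(J, L)$ (namely the one just joined). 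Hence $\ell \in \textsc{ToAdd}^{(i+1)}(J, L)$ by clause (b) of the definition, unless $\ell$ was already in $L$, in which case the CAS at line~\ref{line:join-set-right} fails and no write occurs.

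For completeness at level $i+1$, fix any $\ell = (v_L, v_R) \in \textsc{ToAdd}^{(i+1)}(J, L)$. By the definition, there is a rightward path $P$ at level $i$ from $v_L \to \textproc{down}$ to $v_R \to \textproc{down}$ using only links in $L^{(i)} \cup \textsc{ToAdd}^{(i)}(J, L)$, containing at least one link from $\textsc{ToAdd}^{(i)}(J, L)$. Consider the \emph{last} \textproc{Join} execution at level $i$ that completes line~\ref{line:join-set-left} (setting the \textproc{left} pointer) for a link on $P$; call its target link $\ell^\star$. At the moment this execution reaches lines~\ref{line:join-search-left}--\ref{line:join-search-right}, every link on $P$ has both its \textproc{left} and \textproc{right} fields set (either because it was in $L$, or because some earlier join finished line~\ref{line:join-set-left} for it). Therefore the \textproc{SearchLeft} and \textproc{SearchRight} calls initiated from $\ell^\star$'s endpoints traverse an intact path, ascend through the \textproc{up} pointers of $v_L$ and $v_R$ respectively, and return $v_L$ and $v_R$; the recursive call at line~\ref{line:join-recurse} then invokes \textproc{Join}$(v_L, v_R)$, which either sets $\ell$ or observes that some concurrent execution already did, so $\ell$ is present in the final list.

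The main obstacle is the partial-state race: between lines~\ref{line:join-set-right} and~\ref{line:join-set-left}, a \textproc{right} pointer is set but the matching \textproc{left} pointer is not, so a concurrent \textproc{SearchLeft} may traverse a ``temporarily broken'' level-$i$ path and miss a legitimate parent. The ``last-finisher'' argument above is what neutralizes this: by taking the execution that finalizes line~\ref{line:join-set-left} after all other relevant joins on $P$, we guarantee a consistent snapshot at the moment the searches happen. Packaging this carefully -- in particular, making precise what ``last'' means in an asynchronous model and using the compare-and-swap linearization to totally order the line~\ref{line:join-set-right} events on each link -- is the delicate part of the formal write-up; the inductive skeleton itself is routine once that lemma is in hand.
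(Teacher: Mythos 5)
Your proposal is correct in its essentials, and its central device --- induction on levels combined with the ``last execution to complete line~\ref{line:join-set-left} on the path $P$'' argument --- is precisely the paper's own proof \emph{sketch} of Theorem~\ref{thm:join-correctness}. The paper's formal proof in Appendix~\ref{appendix:join-correctness}, however, does not carry out that induction: it instead linearizes the execution and maintains a global invariant $L \cup \textsc{ToAdd}(J,L) = L_t \cup \textsc{ToAdd}(A_t,L_t)$ for all times $t$, where $A_t$ is a set of ``active'' links whose membership conditions are keyed to exactly which of lines~\ref{line:join-set-right}--\ref{line:join-recurse} each join execution has reached; correctness falls out because $A_T=\emptyset$ at termination. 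The trade-off is this: your level induction is shorter and more intuitive, and your last-finisher lemma does correctly neutralize the race in which a \textproc{right} pointer is set before its matching \textproc{left} pointer --- the join that finishes line~\ref{line:join-set-left} last among those on $P$ runs its searches over a fully linked path and therefore returns $v_L$ and $v_R$. The cost is exactly the part you defer at the end: the inductive hypothesis is a statement about the \emph{final} level-$i$ list, yet level-$(i{+}1)$ joins are spawned mid-history, so you must (i) prove soundness at level $i{+}1$ before completeness so that the level-$i$ hypothesis you invoke is actually available, (ii) use linearizability of the CAS and the write on line~\ref{line:join-set-left} to totally order the finishers and show that every link of $\textsc{ToAdd}^{(i)}(J,L)$ on $P$ is installed by a unique CAS winner (so that ``last'' is well defined), and (iii) handle the case where the recursive call at level $i{+}1$ loses its CAS to another execution that installs the same link. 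The paper's invariant formulation absorbs all of this bookkeeping into a case analysis of single linearized events, at the price of a considerably heavier definition of $A_t$. Either route is viable; if you complete yours, isolate the last-finisher claim as a standalone lemma quantified over the linearized order of the line-\ref{line:join-set-left} writes.
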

We assume all instructions are linearizable and consider the linearized
sequential ordering of operations, stepping through them one by one.
Note that to simplify analysis, we assume we know all the client calls $J$
(calls not generated by a recursive call from line~\ref{line:join-recurse}) to
$\textproc{Join}$ that will occur in this phase. All these calls will be
considered to have begun but to have not yet executed
line~\ref{line:join-set-right}.

We prove this theorem by proving that an invariant always holds through a phase
of joins. We will need to
keep track of a set of active links $A_t$ that are being added at time $t$.
At time $t$, for an execution of
\textproc{Join}$(\codevar{v}_L, \codevar{v}_R)$ on link $\ell = (\codevar{v}_L,
\codevar{v}_R)$:
\begin{enumerate}[label=(\alph*)]
  \item\label{enum:at-begun} If the execution has not yet performed the CAS on
      line~\ref{line:join-set-right} and $v_L \rightarrow
      \textproc{right}$ is null (that is, the CAS could succeed),
      then $\ell$ is in $A_t$.
    \item\label{enum:at-left} If the execution has performed the CAS on line~\ref{line:join-set-right}
      but not the write on line~\ref{line:join-set-left}, then $\ell$ is in
      $A_t$.
    \item If the execution is within the searches on
      lines \ref{line:join-search-left}-\ref{line:join-search-right} and either
      search returns null, we reject $\ell$ from $A_t$. In fact, as soon the
      read happens on line~\ref{line:search-update-current} of the search
      that destines the search call to return null, we reject $\ell$ from $A_t$.
    \item\label{enum:at-search-finish} Otherwise, if the execution is within the searches on
      lines \ref{line:join-search-left}-\ref{line:join-search-right}, consider
      the nodes $p_L$ and $p_R$ that would be found by
      calling $\textproc{SearchLeft}(v_L)$ and $\textproc{SearchRight}(v_R)$
      relative to $L_F$. If $p_L$ and $p_R$ are both non-null and $p_L \rightarrow
      \textproc{right}$ is null (that is, the CAS at the next level could
      succeed), then $\ell$ is in $A_t$.
    \item As soon as the $\textproc{SearchRight}$ on
      \ref{line:join-search-right} returns, if variables $\codevar{parent}_L$
      and $\codevar{parent}_R$ are both non-null, immediately consider this
      execution as finished and consider $\textproc{Join}(\codevar{parent}_L,
      \codevar{parent}_R)$ as begun.
\end{enumerate}
Letting $L_t$ be the state of the linked list at time $t$ with $L_0 = L$, we
show that following invariant holds:
\[
  \text{for all $t$,} \quad L \cup \textsc{ToAdd}(J, L) = L_t \cup \textsc{ToAdd}(A_t, L_t).
\]
After all operations are done at some time step $T$, we will have $A_T =
\emptyset$ and hence $L \cup \textsc{ToAdd}(J, L) = L_T \cup \textsc{ToAdd}(A_T, L_T) =
L_T \cup \textsc{ToAdd}(\emptyset, L_T) = L_T$, showing that the final list $L_T$ is $L \cup \textsc{ToAdd}(J,
L) = L_F$ as desired. Since $L_F$ is fixed, we can proceed by showing that
the invariant holds initially and that $L_t \cup \textsc{ToAdd}(A_t, L_t)$ is constant as
time progresses.

We induct on $t$ to show the invariant holds at all times. The invariant is true
at time $0$ because $L_0 = L$ and $A_0 = J$. For $t > 0$, we consider each event
that can change $L_t$ and $A_t$ from $L_{t-1}$ and $A_{t-1}$. These events are
when we write to pointers on lines
\ref{line:join-set-right}-\ref{line:join-set-left} and when we finish searching
on lines \ref{line:join-search-left}-\ref{line:join-search-right}.

The first such event is a join succeeding on the CAS conditional statement on
line~\ref{line:join-set-right}, writing a pointer from $v_L$ to $v_R$ on some
level $i$.  At time $t-1$, $\ell = (v_L, v_R)$ was already in $A_{t-1}$ by
case~\ref{enum:at-begun} in the definition of $A_{t-1}$, and the link stays in
$A_t$ by case~\ref{enum:at-left}. We add a directional link $(v_L, v_R)_R$ to
$L_t$, which could remove other active links from $A_t$ who are in
case~\ref{enum:at-begun} or case~\ref{enum:at-search-finish}. Any other affected
active links who are in case~\ref{enum:at-begun} are equal to $(v_L, v_R)$, so
their removal does not affect $A_t$. Any affected active link $\ell'$ in
case~\ref{enum:at-search-finish} is a child link of $(v_L, v_R)$ who has already
set pointers to add $\ell' \in L_{t-1}$. Because $\ell'$ is already in
$L_{t-1}$, removing $\ell'$ from $A_t$ and thus from $\textsc{ToAdd}(A_t, L_t)$ does not
cause issues on level $i-1$. However, we should check that
$\textsc{ToAdd}(A_t, L_t)$ is unaffected at higher levels. Indeed, the
next link $\ell'$ could directly affect in the inductively-defined
$\textsc{ToAdd}(A_t, L_t)$ is its parent $\ell$, but $\ell$ stays in
$\textsc{ToAdd}^{(i)}(A_t, L_t)$ due to $\ell$ being added to $A_t$. Thus
$\textsc{ToAdd}^{(i)}(A_t, L_t)$ is unchanged, and inductively each
$\textsc{ToAdd}^{(k)}(A_t, L_t)$ for $k > i$ is unchanged as well. Therefore,
only $\ell'$ is removed from $\textsc{ToAdd}(A_t, L_t)$, which still preserves
the invariant. As a whole, $L_t$ is changed by adding $(v_L, v_R)_R$ (which is
acceptable because $(v_L, v_R) \in A_{t-1} \subseteq L_{t-1} \cup
\textsc{ToAdd}(A_{t-1}, L_{t-1})$ already), and $\textsc{ToAdd}(A_t, L_t)$ is
changed by possibly removing some children of $\ell'$ who were already in
$L_{t-1}$. With this, we see that the invariant is preserved with $L_t \cup
\textsc{ToAdd}(A_t, L_t) = L_{t-1} \cup \textsc{ToAdd}(A_{t-1}, L_{t-1})$.

Another such event is a join finishing the write to a pointer $v_R \rightarrow
\textproc{left} = v_L$ on line~\ref{line:join-set-left}. Let $\ell = (v_L,
v_R)$ be on level $i$. This event adds $\ell_L$ to $L_t$, and now $\ell \in L_t$ since
the opposite direction link $\ell_R$ was added at an earlier time when this join
completed the CAS on line~\ref{line:join-set-right}. We also know $\ell \in
A_{t-1}$ since this join was in case~\ref{enum:at-left} of the definition of
$A_{t-1}$ before time $t$. Now we enter case~\ref{enum:at-search-finish}.  There are a
few possibilities to consider here. The first is that one of $p_L$ or $p_R$ as
described in case~\ref{enum:at-search-finish} does not exist, in which case
we remove $\ell$ from $A_t$. The only effect this has on $\textsc{ToAdd}(A_t, L_t)$ is to
remove $\ell$ from $\textsc{ToAdd}(A_t, L_t)$; nothing in $\textsc{ToAdd}^{(i+1)}(A_t, L_t)$ depends on
$\ell$ since $\ell$ has no parent, so the only link removed is $\ell$ from
$\textsc{ToAdd}(A_t, L_t)$. Since $\ell \in L_t$, $L_t \cup \textsc{ToAdd}(A_t, L_t)$ is preserved. The
second possibility is that $p_L$ and $p_R$ exist, but $(p_L, p_R)_R \in L_t$
already. Again we remove $\ell$ from $A_t$, and again the invariant is
preserved so long as we show that $\textsc{ToAdd}^{(i+1)}(A_t, L_t)$ is unchanged and thus we
only remove $\ell$ from $\textsc{ToAdd}(A_t, L_t)$.  Removing $\ell$ can only affect
$\textsc{ToAdd}^{(i+1)}(A_t, L_t)$ by removing its parent $(p_L, p_R)$ from $
\textsc{ToAdd}^{(i+1)}(A_t, L_t)$. This can only happen if $(p_L, p_R)$ is in $\textsc{ToAdd}^{(i+1)}(A_{t-1},
L_{t-1})$ but $(p_L, p_R) \not\in A_{t-1}$. Then whichever execution of join set
$p_L \rightarrow \textproc{right}$ to non-null must have also completed the
corresponding write on line~\ref{line:join-set-left} because otherwise $(p_L,
p_R)$ would be in $A_{t-1}$. This means that $(p_L, p_R) \in L_{t-1}$, which
combined with $(p_L, p_R) \not\in A_{t-1}$ gives that $(p_L, p_R)$ could not
have been in $\textsc{ToAdd}^{(i+1)}(A_{t-1}, L_{t-1})$, a contradiction.  Hence
$\textsc{ToAdd}^{(i+1)}(A_t, L_t) = \textsc{ToAdd}^{(i+1)}(A_{t-1}, L_{t-1})$, so the invariant is
preserved. The last possibility is that $\ell$ stays in $A_t$, in which case the
invariant is still preserved.

A third event is that a join that wrote some link $\ell \in L_{t-1}$ finishes its searches on
line~\ref{line:join-search-right} and successfully passes the conditional
statement on line~\ref{line:join-if-search}.
This removes a link $\ell$ from $A_t$ and adds its parent link $\ell' = (v'_L,
v'_R)$ to $A_t$.
This removes $\ell$ from $\textsc{ToAdd}^{(i)}(A_t, L_t)$ but leaves $\textsc{ToAdd}^{(i+1)}(A_t, L_t)$
unchanged --- in order for the join's search to find $\ell'$, $\ell'$ must have
already been in $\textsc{ToAdd}^{(i+1)}(A_{t-1}, L_{t-1})$ by virtue of a rightward path from
$v'_L\rightarrow \textproc{down}$ to
$v'_R\rightarrow \textproc{down}$ passing through $\ell \in A_{t-1} \subset
\textsc{ToAdd}(A_{t-1}, L_{t-1})$. Then $\textsc{ToAdd}(A_t, L_t) =
\textsc{ToAdd}(A_{t-1}, L_{t-1}) \setminus \set{\ell}$,
and $\ell$ must already be in $L_{t-1} = L_t$. Thus the invariant is preserved.

The last interesting event that can occur is that a join that wrote some link $\ell$ on
level $i$ finishes its searches on
line~\ref{line:join-search-right} and fails the conditional statement on
line~\ref{line:join-if-search}. If this is because one of the parent nodes ($p_L$
and $p_R$) does not exist, then already $\ell \not\in A_t$ by
case~\ref{enum:at-search-finish} in the definition of $A_t$, and nothing is
changed. Otherwise this is because $\ell$ fails to find a parent node even those
both parent nodes exist in $L_F$. In this case we remove $\ell$ from $A_t$. We note
that $\ell \in L_t$, so once again, if we can show that
$\textsc{ToAdd}^{(i+1)}(A_{t}, L_{t}) = \textsc{ToAdd}^{(i+1)}(A_{t-1}, L_{t-1})$ and hence
$\textsc{ToAdd}(A_{t}, L_{t}) = \textsc{ToAdd}(A_{t-1}, L_{t-1}) \setminus\set{\ell}$, then the
invariant is preserved. The only way $\textsc{ToAdd}^{(i+1)}(A_{t}, L_{t})$ may change is if
$(p_L, p_R) \in \textsc{ToAdd}^{(i+1)}(A_{t-1}, L_{t-1})$ by virtue of there being a
rightward path from $p_L \rightarrow \textproc{down}$ to $p_R \rightarrow
\textproc{down}$, and removing $\ell$ from $A_t$ removes the last link from
$\textsc{ToAdd}^{(i)}(A_t, L_t)$ in this path. But this is impossible because if $\ell$ is
the last link from $\textsc{ToAdd}^{(i)}(A_t, L_t)$, then the remainder of the links on the
path are already in $L_t$, so the join should have been able to traverse this
path to find parents $p_L$ and $p_R$. Hence the invariant is preserved too in this final
case.

Note that the event that a join fails the conditional statement on the CAS on
line~\ref{line:join-set-right} changes neither $L_t$ nor $A_t$. This is because
we carefully define $A_t$ to not include such joins that are doomed to fail.

Finally, we must argue that this whole process terminates. Consider any
particular call to join. The whole set of nodes is bounded by some finite
height, and each recursive join call generated walks up one level, so the number
of recursive join calls is finite. It now suffices to show that each individual
call takes a finite number of steps. The only place the algorithm may loop
indefinitely is if the searches get stuck in a cycle, never succeeding on the
conditional on line~\ref{line:search-conditional} of the search algorithm
(Algorithm~\ref{alg:search-left}). In
particular, it never detects that $\codevar{current} = \codevar{v}$. Yet, if a
path out of $\codevar{v}$ leads to a cycle that does not contain
$\codevar{v}$ itself, then this contradicts that the links in $L_t \subseteq L_t
\cup \textsc{ToAdd}(A_t,L_t) = L_F$ are a subset
of those of the well-formed list $L_F$; a cycle at a level in a well-formed
list loops fully around the level.

This completes the proof that the invariant holds at all time steps, proving the
theorem.

\subsection{Proof of
Theorem~\ref{thm:split-correctness} --- Concurrent Split Correctness}\label{appendix:split-correctness}

Here we want to prove that a phase of concurrent \textproc{Split}
(Algorithm~\ref{alg:split}) calls on unaugmented skip lists behaves
correctly.

Fix a finite set of nodes in an
initial, valid skip list $L$. We will share much of the notation used in the
proof of correctness for join in Appenxdix~\ref{appendix:join-correctness}.

Each split operation is specified by a pointer to a node $v$, indicating the
intent to cut the list between $v$ and its successor. In our
analysis, we still want to talk about links, so we  write $\textsc{NextLink}(v)$ to indicate
the two directional link between $v$ and its successor in the original skip list $L$, with
$\textsc{NextLink}(v)_L$ and $\textsc{NextLink}(v)_R$ distinguishing between the directional links. For a set of
nodes $V$, we let $\textsc{NextLink}(V) = \bigcup_{v \in V} \textsc{NextLink}(v)$.

We let $P(v, L')$ denote $v$'s left parent relative to $L'$, which is the
result of $\textproc{SearchLeft}(v)$ on the list $L'$. For a set of nodes $V$,
we let $P(V, L') = \bigcup_{v \in V} \set{P(v, L')}$ where we let $\set{P(v,L')}
= \emptyset$ if $v$ does not have a left parent in $L'$.

Using the following rules, we define $\textsc{Reachable}(v, L')$ to represent the set of nodes,
which we will call \emph{reachable ancestors} of $v$, whose links
$\textproc{Split}(v)$ is capable of cutting in $L'$:
\begin{enumerate}[label=(\alph*)]
  \item If $v \rightarrow \textproc{right}$ is non-null in $L'$, then $v \in
    \textsc{Reachable}(v, L')$.
  \item\label{enum:split-r-inductive} If $u \in \textsc{Reachable}(v, L')$, $P(u, L')$ exists, and $P(u, L') \rightarrow
    \textproc{right}$ is non-null in $L'$, then $P(u, L') \in \textsc{Reachable}(v, L')$.
\end{enumerate}
For a set of nodes $V$, we let $\textsc{Reachable}(V, L') = \bigcup_{v \in V}
\textsc{Reachable}(v, L')$.

Now we formally present the theorem we wish to prove:
\begin{theorem}
  Starting with a well-formed skip list $L$ and a set of split operations
  operations $S$ (given as a set of nodes at the bottom level of $L$ whose links
  between their successor should be cut), after all split operations in
  $S$ complete, the final skip list is $L \setminus \textsc{NextLink}(\textsc{Reachable}(S, L))$.
\end{theorem}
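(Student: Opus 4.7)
The plan is to mirror the invariant-based proof of Theorem~\ref{thm:join-correctness}. I define an active set $S_t$ of in-flight splits together with a classification of each split's progress, and show that a single equality between current and final list states is preserved throughout the execution.

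Classify each currently running call \textproc{Split}$(v)$ by its progress: (a) the call has begun but has not yet executed its CAS on line~\ref{line:split-set-right}, and $v \rightarrow \textproc{right}$ is still non-null; (b) the CAS has succeeded but the write on line~\ref{line:split-set-left} has not yet been issued; (c) \textproc{SearchLeft} on line~\ref{line:split-search} is in progress. A split is in $S_t$ precisely when it falls into one of these three cases. As soon as \textproc{SearchLeft} returns a non-null parent $p$, regard the caller as finished and \textproc{Split}$(p)$ as begun in case (a); a failed CAS or a \textproc{SearchLeft} returning null removes the split from $S_t$ with no further effect.

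The invariant I would maintain is that $L_t$ differs from the final target $L_F = L \setminus \textsc{NextLink}(\textsc{Reachable}(S, L))$ exactly in those directional links that still need to be cut because of the pending work in $S_t$. Making this precise requires slightly augmenting $\textsc{Reachable}$ so that a split $v$ in case (b) or (c) contributes its direct parent $P(v, L_F)$ as a reachable ancestor (its own link having already been cut, the recursive call on $P(v,L_F)$ is still pending). Writing $\widetilde{\textsc{Reachable}}$ for this adjusted relation, the invariant becomes
\[
  L_t \setminus \textsc{NextLink}(\widetilde{\textsc{Reachable}}(S_t, L_t)) \;=\; L \setminus \textsc{NextLink}(\textsc{Reachable}(S, L)).
\]
It holds trivially at $t = 0$ (where $L_0 = L$, $S_0 = S$, and $\widetilde{\textsc{Reachable}}$ agrees with $\textsc{Reachable}$), and at termination $S_T = \emptyset$ forces the left-hand side to reduce to $L_T$, giving the theorem.

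The inductive step is a case analysis on the state transitions: a successful CAS on line~\ref{line:split-set-right}, the write on line~\ref{line:split-set-left}, completion of \textproc{SearchLeft} (either finding a parent or returning null), and a failed CAS. In each case the directional links added to or removed from $L_t$ are exactly those that leave or enter $\textsc{NextLink}(\widetilde{\textsc{Reachable}}(S_t, L_t))$, preserving the projected final list. The main obstacle, as in the join proof, lies in verifying the \textproc{SearchLeft} completion step: I must show that every level-$(i+1)$ link that should be cut is actually discovered by some split's upward walk. Following the theorem sketch, this uses the observation that the leftmost active split on the level-$i$ path between the parent link's two direct children traverses a still-intact prefix of that level and therefore successfully locates the parent endpoint; conversely, no extraneous link is removed because rule~\ref{enum:split-r-inductive} requires $P(u,L')\rightarrow\textsc{right}$ to be non-null, preventing already-cleared parents from contributing further ancestors. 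Termination is immediate from the bounded skip-list height and the same cycle-detection argument that bounds \textproc{SearchLeft} in the join proof.
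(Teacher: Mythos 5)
Your overall strategy is the one the paper actually uses: linearize the execution, classify each in-flight \textproc{Split} by how far it has progressed (pre-CAS, post-CAS but pre-write of the \textproc{left} pointer, searching), and maintain an invariant saying that the current list minus the links still ``owed'' by the active splits equals the final target $L \setminus \textsc{NextLink}(\textsc{Reachable}(S,L))$. The paper just splits your single set $S_t$ into three sets $B_t$, $C_t$, $E_t$ and defines the owed links as $\textsc{NextLink}(\textsc{Reachable}(B_t \cup P(C_t \cup E_t, L_t), L_t) \cup C_t)$; your leftmost-split observation for why parent links are still discoverable is also the paper's key step.

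Two points in your formalization would fail as literally written, though both are repairable. First, you charge a case-(b)/(c) split with the parent $P(v, L_F)$ computed \emph{in the final list}. But \textproc{SearchLeft} runs against the evolving list, and in $L_F$ the leftward path from $v$ is typically already severed, so $P(v,L_F)$ is null for most split nodes and your $\widetilde{\textsc{Reachable}}$ would stop owing the higher-level links the moment the first bottom link is fully cut. The parents must be taken relative to $L_t$, and the delicate part of the induction is exactly the hand-off: when one split's removal of a \textproc{left} pointer destroys $P(u,L_t)$ for another active split $u$, the lost ancestors must be shown to be covered by $\textsc{Reachable}(P(v,L_t),L_t)$ for the split $v$ that did the cutting. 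Second, your case (b) credits the split only with its parent, ``its own link having already been cut'' --- but after the CAS on line~\ref{line:split-set-right} only the directional link $\textsc{NextLink}(v)_R$ is gone; $\textsc{NextLink}(v)_L$ is still in $L_t$ and must still be counted as owed (this is why the paper's $\textsc{ToDelete}_t$ carries the extra $\cup\, C_t$ term). With those two corrections your invariant becomes the paper's and the case analysis goes through.
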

Note that $\textsc{Reachable}(S, L)$ is the set of all ancestors of nodes of $S$, which is
precisely the set of nodes whose links to their successors we hope to remove in
a phase of splits.

We assume all instructions are linearizable and consider the linearized
sequential ordering of operations, stepping through them one by one.
Again, to simplify analysis, we assume that we know all the client calls $S$
(calls not generated by a recursive call from line~\ref{line:split-recurse}) to
$\textproc{Split}$ that will occur in this phase and that they all begin at time
$0$.

To prove this theorem, we define an invariant and show that it always holds.
This time, we will distinguish several sets of active nodes $B_t, C_t,
E_t$ that are being worked on at time $t$. The set $B_t$ represents split calls
that have not yet CASed, $C_t$ represents split calls that have CASed but have
not yet cleared the pointer in the opposite direction, and $E_t$ represents the
progress of a search for a parent. At time $t$, for an execution of
\textproc{Split}$(v)$:
\begin{enumerate}[label=(\alph*)]
    \item If the execution has not yet performed the CAS on
      line~\ref{line:split-set-right} and $\codevar{ngh}$ is non-null, then $v$
      is in $B_t$.
    \item If the execution finished the CAS on line~\ref{line:split-set-right}
      but has not yet cleared the pointer in the opposite direction
      on line~\ref{line:split-set-left}, then $v \in C_t$.
    \item If the execution finished the write on line~\ref{line:split-set-left},
      then we are in the search phase. Let $u$ be the latest value written to
      the local variable $\codevar{current}$ in the execution of \textproc{SearchLeft} (Algorithm~\ref{alg:search-left}). Then
      $u$ is in $E_t$. For convenience we consider $\codevar{current}$ to be
      written to as soon as the pointer read occurs in line~\ref{line:search-update-current}
      of \textproc{SearchLeft}, and consider $\codevar{current}$ to be null if
      the subsequent conditional on line~\ref{line:search-conditional} will
      fail.
    \item Once the recursive call begins on line~\ref{line:split-recurse}, this
      execution is finished.
\end{enumerate}
Letting $L_t$ be the state of the linked list at time $t$ with $L_0 = L$, we
show that following invariant holds:
\[
  \text{for all $t$,} \quad L \setminus \textsc{NextLink}(\textsc{Reachable}(S, L)) = L_t \setminus \textsc{ToDelete}_t
\]
where
\[
  \textsc{ToDelete}_t = \textsc{NextLink}(\textsc{Reachable}(B_t \cup P(C_t \cup E_t, L_t), L_t) \cup C_t).
\]
After all operations are done at some time step $T$, we have $B_T = C_T = E_T
= \emptyset$ and hence $L \setminus \textsc{NextLink}(\textsc{Reachable}(S, L)) = L_T \setminus \textsc{ToDelete}_T = L_T
\setminus \textsc{NextLink}(\textsc{Reachable}(\emptyset \cup P(\emptyset, L_T), L_T) \cup \emptyset) = L_T$, showing
that the final list is $L \setminus \textsc{NextLink}(\textsc{Reachable}(S, L))$
as desired. Since $L \setminus \textsc{NextLink}(\textsc{Reachable}(S,
L))$ is fixed, we proceed by showing that the invariant holds initially and that
$L_t \setminus \textsc{ToDelete}_t$ is constant.

We induct on $t$ to show the invariant holds at all times. At time $0$, we have
$L_0 = L$, $B_0 = S$, $C_0 = E_0 = \emptyset$, and $\textsc{ToDelete}_0 =
\textsc{NextLink}(\textsc{Reachable}(S \cup P(\emptyset, L), L) \cup \emptyset = \textsc{NextLink}(\textsc{Reachable}(S))$. Then $L_0 \setminus \textsc{ToDelete}_0
= L \setminus \textsc{NextLink}(\textsc{Reachable}(S))$ and the invariant holds.  For $t > 0$, we consider each
event that can change $L_t, B_t, C_t$, and $E_t$.

When a split on node $v$ succeeds on the CAS, we move a $v$ from $B_t$
to $C_t$, and we remove $\textsc{NextLink}(v)_R$ from $L_t$. Moving an element from $B_t$ to
$C_t$ when that element had a non-null successor at time $t-1$ does not affect $\textsc{ToDelete}_t$. Removing $\textsc{NextLink}(v)_R$ from $L_t$ has the effect of
possibly shrinking $\textsc{Reachable}(u, L_t)$ for various nodes $u$. If $\textsc{Reachable}(u, L_t)$ shrinks, it
is because removing $\textsc{NextLink}(v)_R$ from $L_t$ removes $v$ from $\textsc{Reachable}(u, L_t)$ and then
subsequently removes all nodes that come from repeatedly applying
rule~\ref{enum:split-r-inductive} from the definition of $\textsc{Reachable}$ to $v$. But these
removed nodes are exactly covered by $\textsc{NextLink}(\textsc{Reachable}(P(v, L_t), L_t) \cup \set{v}) \subseteq \textsc{ToDelete}_t$. In
other words, the loss of some of $u$'s reachable ancestors is acceptable because
$v$ can still reach those ancestors. We see then that we preserve $\textsc{ToDelete}_t$. Also,
since $\textsc{NextLink}(v)$ was already in $\textsc{ToDelete}_{t-1}$, $L_t \setminus \textsc{ToDelete}_t$ is preserved despite the
loss of $\textsc{NextLink}(v)_R$ from $L_t$.

When a split on node $v$ fails its CAS or fails the check that $\codevar{ngh}$ is
non-null, we remove a $v$ from $B_t$. However, this means that $v \rightarrow
\textproc{right}$ was null already before this time step, so $\textsc{Reachable}(v, L_{t-1}) =
\emptyset$. Hence removing $v$ from $B_t$ does not change $\textsc{ToDelete}_t$, and the invariant
is preserved.

When a split on node $v$ succeeds on clearing the pointer on
line~\ref{line:split-set-left}, we move $v$ from $C_t$ to $E_t$
and remove $\textsc{NextLink}(v)_L$ from $L_t$. Moving $v$ from $C_t$ to $E_t$ removes $\textsc{NextLink}(v)$
from $\textsc{ToDelete}_t$. This is exactly compensated by removing $\textsc{NextLink}(v)$ from $L_t$, which
happens as a result of this split clearing $\textsc{NextLink}(v)_L$ on this time step and
having cleared $\textsc{NextLink}(v)_R$ on some previous time step. Removing $\textsc{NextLink}(v)_L$ from $L_t$
may also invalidate $P(u, L_t)$ for various nodes $u$ and subsequently affect
$\textsc{Reachable}(\cdot, L_t)$ for other nodes. However, if $P(u, L_t)$ becomes non-existent as
a result of the removal of $\textsc{NextLink}(v)_L$, then $P(v, L_t) = P(v, L_{t-1}) = P(u,
L_{t-1})$. As a corollary, if a node $u$ loses reachable ancestors in $\textsc{Reachable}(u,
L_t)$ due to removing $\textsc{NextLink}(v)_L$, those lost ancestors are covered
by $\textsc{Reachable}(P(v, L_t), L_t)$. Then we see that these changes in $P(\cdot, L_t)$ and $\textsc{Reachable}(\cdot,
L_t)$ do not affect $\textsc{ToDelete}_t$, so as a whole $L_t \setminus \textsc{ToDelete}_t$ is preserved.

When we update $\codevar{current}$ in the search for a parent to something that
does not fail the conditional on line~\ref{line:search-conditional} of
$\textproc{SearchLeft}$, we remove a $v$ from $E_t$ where $v \rightarrow
\textproc{up}$ is null and add its predecessor $v \rightarrow \textproc{left}$
to $E_t$. Since $v$ has no up parent, $P(v) = P(v \rightarrow
\textproc{left})$, and we see $\textsc{ToDelete}_t$ is preserved.

When a search for a parent returns null and causes a split call to
exit without making a recursive call, we remove some node $u$ from $E_t$.
However, the fact that the search returned a null indicates that $P(u, L_{t-1})$
already did not exist, so the removal of $u$ from $E_t$ does not change $\textsc{ToDelete}_t$.
Thus the invariant is preserved.

Lastly, when a search for a parent gives a non-null node $p$ and leads to a recursive
call $\textproc{Split}(p)$, we add $p$ to $B_t$ and remove its direct child $v$
from $E_t$. Note that $P(v, L_t) = p$, so $\textsc{ToDelete}_t$ is
unchanged and the invariant is preserved.

Finally, we must argue that this whole process terminates. The argument is the
same as for join. Each recursive split call walks up one level in our
finite-height set of nodes. Each individual call cannot loop in its search
since, as the only pointer modifications are setting pointers to null, the links
of $L_t$ are always a subset of the well-formed list $L$.

\section{Skip list efficiency}\label{appendix:skip-list-efficiency}

Recall that in our skip lists, each node independently has a direct parent with
probability $0 < p < 1$.

\subsection{Work}
We prove a work bound of $O(k \log (1 + n/k))$ in expectation over a set of $k$
splits over $n$ elements for unaugmented skip lists. The same strategy proves
the bound for joins and for operations on augmented skip lists.

For a set of $k$ splits, we first show that $O(k \log (1 + n/k))$ links need to be cut
in expectation. Over the first $h = \ceil{\log_{1/p}(1 + n/k)}$ levels, each
split needs to remove at most $h$ links (one at each level), so there are
$O(kh)$ pointers to remove over the first $h$ levels. For each level $k > h$,
the number of links to remove is bounded by the number of nodes on the level. The
probability that a particular node has height at least $\ell$ is $p^{\ell-1}$,
so the expected number of nodes reaching level $\ell$ is $np^{\ell-1}$.
Then the number of links summed across all levels $\ell > h$ is at most
\begin{align*}
  n \sum_{\ell=h + 1}^\infty p^{\ell-1} = np^h \frac{1}{1 - p} \le np^{\log_{1/p}(1 + n/k)}
  \frac{1}{1 - p} \\
  = \frac{n}{(1-p)(1 + n/k)} \le
  \frac{n}{(1-p)(n/k)} = O(k).
\end{align*}
Therefore, the expected number of links we need to cut in total is $O(kh) + O(k)
= O(k \log (1 + n/k))$.

For each link to remove, the amount of work to find that link from the
previous child link in a split is $O(1)$ in expectation. To search for a link at
level $i + 1$ that needs removal, we call \textproc{SearchLeft}, which
walks left from the previous place we removed a link on level $i$ until we see a
direct parent. The amount of work is proportional to the number of nodes we
touch when walking left. The probability a node has a direct parent is
$p$ independently, so the number of nodes we must touch until we see a direct
parent is distributed according to $\on{Geometric}(p)$ with expected value $1/(1
- p) = O(1)$. (If we quit early due to reaching the beginning of a list or due
to detecting a cycle, that only reduces the amount of work we do.) Thus this
traversal work to find parent links only affects the expected work by a constant
factor.

Moreover, no two split operations can both remove the same link because we remove
links with a CAS. Whoever CASes the link first successfully clears the link, and
whoever comes afterwards quits. The quitting execution only does $O(1)$ expected extra
work from the extra traversal to find the already claimed link. Thus there is no
significant duplicate work per split.

Therefore, the work overall for $k$ splits is $O(k \log (1 +  n/k))$.

\subsection{Depth}
For analyzing depth, we know that with high probability, every search
path (a path from the top level of a skip list to a particular node on the
bottom level, or the reverse) in an $n$-element skip list has length $O(\log
n)$. A proof is given in~\cite{demaine2004lectureskip}. The main critical paths
of our operations consist of traversing search paths and doing up to a constant
amount of extra work at each step, so we get a depth bound of $O(\log n)$ with
high probability for any of our operations.

\end{document}